\newlist{nobullet}{itemize}{1}
\setlist[nobullet]{label={}, leftmargin=0pt, labelsep=0pt, itemsep=0pt, topsep=0pt, parsep=0pt}
\newcommand{\R}{\mathbb{R}}
\newcommand{\poly}[1]{\text{poly}\!\left(#1\right)}
\renewcommand{\vec}[1]{\bm{#1}}
\renewcommand{\ket}[1]{|#1\rangle}
\renewcommand{\bra}[1]{\langle#1|}
\newcommand{\aqa}{$\langle aQa^L \rangle$ Applied Quantum Algorithms, Leiden University, The Netherlands}
\newcommand{\lorentz}{Instituut-Lorentz, Leiden University, Niels Bohrweg 2, 2333 CA, The Netherlands}
\newcommand{\liacs}{LIACS, Leiden University, Niels Bohrweg 1, 2333 CA, Leiden, The Netherlands}
\newcommand{\bmw}{BMW Group, 80788 München, Germany}
\newtheoremstyle{journalrunin}%
  {10pt}{0pt}
  {}{}
  {\itshape}
  {.}{1em}
  {}
\theoremstyle{journalrunin}
\newtheorem{theorem}{Theorem}
\newtheorem{lemma}{Lemma}
\newtheorem{definition}{Definition}
\newtheorem{prop}{Proposition}
\newtheorem{exmp}{Example}
\newtheorem{hyp}{Hypothesis}
\definecolor{color_1}{RGB}{133, 116,  55}
\definecolor{color_2}{RGB}{64,  99,  48}
\definecolor{color_3}{RGB}{32,  73, 116}
\begin{document}
\title{Variational Quantum Generative Modeling by Sampling Expectation Values of Tunable Observables}

\author{Kevin Shen}
\email{kevin.shen@bmwgroup.com}%
\affiliation{\aqa}%
\affiliation{\liacs}%
\affiliation{\bmw}%

\author{Andrii Kurkin}
\affiliation{\aqa}%
\affiliation{\liacs}%
\affiliation{\bmw}%

\author{Adrián Pérez-Salinas}
\affiliation{\aqa}%
\affiliation{\lorentz}%

\author{Elvira Shishenina}
\thanks{Now at Quantinuum, Leopoldstrasse 180, 80804 München, Germany.}
\affiliation{\bmw}%

\author{Vedran Dunjko}
\affiliation{\aqa}%
\affiliation{\liacs}%

\author{Hao Wang}
\email{h.wang@liacs.leidenuniv.nl}%
\affiliation{\aqa}%
\affiliation{\liacs}%


\maketitle

\section{abstract}
Expectation Value Samplers (EVSs) are quantum generative models that can learn high-dimensional continuous distributions by measuring the expectation values of parameterized quantum circuits. However, these models can demand impractical quantum resources for good performance. We investigate how observable choices affect EVS performance and propose an Observable-Tunable Expectation Value Sampler (OT-EVS), which achieves greater expressivity than standard EVS. By restricting the selectable observables, it is possible to use the classical shadows measurement scheme to reduce the sample complexity of our algorithm. In addition, we propose an adversarial training method adapted to the needs of OT-EVS. This training prioritizes classical updates of observables, minimizing the more costly updates of quantum circuit parameters. Numerical experiments, using an original simulation technique for correlated shot noise, confirm our model's expressivity and sample efficiency advantages compared to previous designs. We envision our proposal to encourage the exploration of continuous generative models running with few quantum resources. 

\section{Introduction}

Generative modeling is the task of, given a dataset, learning to generate similar new data. Generative models such as variational autoencoders \cite{kingma_auto-encoding_2022}, diffusion probabilistic models \cite{sohl-dickstein_deep_2015}, and generative adversarial networks (GANs) \cite{goodfellow_generative_2014} have achieved remarkable success in various industrial applications, shaping diverse aspects of our daily lives. However, these models face challenges associated with high computational demands and sustainability concerns \cite{thompsoN_computational_2022, pattersoN_carbon_2021}. As promising alternatives, generative models based on quantum computers have been investigated with some proven advantages for specific artificial problems \cite{gao_efficient_2017, coyle_born_2020, sweke_quantum_2021}. 
In this domain, numerous proposals \cite{tian_recent_2023} have emerged, such as the Quantum Boltzmann Machine \cite{amin_quantum_2018, benedetti_quantum-assisted_2017, kieferova_tomography_2017}, the Quantum Circuit Born Machine \cite{cheng_information_2018, liu_differentiable_2018, benedetti_generative_2019}, and the quantum GANs \cite{lloyd_quantum_2018, dallaire-demers_quantum_2018, zoufal_generative_2021}. In most cases, the models allow the modeling of discrete distributions. 

The Expectation Value Sampler (EVS) \cite{romero_variational_2021, anand_noise_2021, barthe_expressivity_2024} is a comparatively less explored quantum generative model which, in contrast to previous examples, natively models continuous distributions, another example being the quantum diffusion models \cite{zhang_generative_2023, cacioppo_quantum_2023, kolle_quantum_2024}. In EVS, the generated data emerge as expectation values of preselected observables measured on a state sampled from a particular distribution. 
Since the proposal of EVS, there have been some studies exploring its integration with classical neural networks in a hybrid framework as well as benchmarking \cite{riofrio_characterization_2024, kiwit_benchmarking_2024} for applications to image synthesis \cite{huang_experimental_2021, shu_variational_2024, tsang_hybrid_2023, chang_latent_2024} and molecule design \cite{li_quantum_2021, kao_exploring_2023}. 
Yet, many questions regarding the theoretical properties of EVS remain unsolved. In particular, the choice of observables, a key component of EVS, has remained mostly unaddressed. 

In this work, we analyze the role of observables and propose an enhanced EVS model with tunable observables. The choice of tunable observables aims to balance quantum and classical computational resources without sacrificing performance. We provide two results in this direction. First, we show that our tunable-observable models have higher expressivity than fixed-observable models. That is, the set of reachable probability distributions expands. Second, we design a specific observable parameterization that uses classical shadows \cite{huang_predicting_2020} to reduce the sample complexity of our model. The considered observables are Pauli strings with low locality, forming a linear space of dimension super-polynomial in the qubit number. A concurrent work \cite{Shi2024} also studies tunable observables in the EVS model but only addresses the case of $1$-local observables.

We devise a tailored training method for OT-EVS inspired by adversarial training \cite{goodfellow_generative_2014, arjovsky_wasserstein_2017, gulrajani_improved_2017}. The key idea of the training is to update the tunable observables more frequently than the parameters defining the quantum circuit. Since updating the observables requires only classical computation, this can be done without extra quantum measurements. We conduct numerical experiments to benchmark our adapted training method and observe its generally better performance than standard adversarial training for the same amount of quantum resources. 
 
\section{Result}
\label{sec: background}
\subsection*{Expectation Value Sampler}
EVSs are generative models built on parameterized quantum circuits (PQCs) \cite{havlicek_supervised_2019, schuld_quantum_2019, cerezo_variational_2021}, designed for learning continuous distributions. EVSs belong to the class of latent variable models, meaning that data are generated by transforming from some latent random variables. Specifically, latent variables are embedded as the gate parameters of a PQC to prepare a random quantum state. The generated data are the expected values of some observables measured on that state, which form a random vector.  
\begin{definition}[Expectation Value Sampler, adapted from \cite{barthe_expressivity_2024}]
\label{def:evs}
 An Expectation Value Sampler is a generative model specified by a triple of constructive components $(\mathbb{P}_z, U_{\vec{\vec{\theta}}}, (O_m)^{M}_{m=1})$. $\mathbb{P}_z$ is the distribution of some efficiently samplable latent variable $\vec{z}$. $U_{\vec{\vec{\theta}}}$ is some $n$-qubit PQC, parameterized by some $\vec{\vec{\theta}} \in \varTheta \subset \mathbb{R}^d$, that prepares a state $\ket{\psi_{\vec{\vec{\theta}}}(\vec{z})}=U_{\vec{\vec{\theta}}} (\vec{z}) \ket{0}^{\otimes n}$. $(O_m)^{M}_{m=1}$ are some observables defined on the same $n$-qubit system. $U_{\vec{\vec{\theta}}}$ and $(O_m)^{M}_{m=1}$ together build a parametric family of functions $G_{\vec{\theta}}$ that transforms $\mathbb{P}_z$ into the output distribution $\mathbb{P}$. In particular, $G_{\vec{\theta}}$ writes
\begin{equation}
\begin{aligned}
G_{\vec{\theta}}:\mathcal{Z} &\longrightarrow \mathcal{Y} \\
\vec{z} &\longmapsto  \vec{y} = \left( \bra{\psi_{\vec{\vec{\theta}}}(\vec{z})} O_m \ket{\psi_{\vec{\vec{\theta}}}(\vec{z})}\right)^M_{m=1},
\end{aligned}
\end{equation}
where $\mathcal{Z} \subset \mathbb{R}^K$ and $\mathcal{Y} \subset \mathbb{R}^M$ are the supports of $\vec{z}$ and $\vec{y}$ respectively. In sampling mode, the Expectation Value Sampler repeatedly draws $\vec{z} \sim \mathbb{P}_z$ and returns the transformed sample $G_{\vec{\theta}}(\vec{z}) \sim \mathbb{P}$ as output.
\end{definition} 

We consider latent variables that follow a uniform distribution $\mathbb{P}_z = \mathcal{U}([-\pi, \pi]^K)$ throughout this work for simplicity. By fixing $\mathbb{P}_z$, we will then denote an EVS by its function family $G_{\vec{\theta}}$. 

An interesting property of EVSs is that the output dimension $M$ and the number of qubits $n$ are not intrinsically bound to each other but should be separately determined based on the learning task. EVSs of as few as $\Theta(\log(M))$ qubits are proven to be universal approximators for $M$-dimensional distributions \cite{barthe_expressivity_2024}, provided $U_{\vec{\theta}}$ has infinite circuit depth. This finding underscores the capability of EVSs to learn high-dimensional distributions with few qubits. 

\subsection*{Observable-Tunable Expectation Value Sampler}
\label{sec: expressivity and parameterization}
The main contribution of this paper is the Observable-Tunable Expectation Value Sampler (OT-EVS), a generalization of the EVS with fixed observables (Definition~\ref{def:evs}, hereafter termed OF-EVS).

\begin{definition}[Observable-Tunable Expectation Value Sampler (OT-EVS)] \label{def:ot-evs}
An Observable-Tunable Expectation Value Sampler is a generative model specified by a quadruple of constructive components $(\mathbb{P}_z, U_{\vec{\vec{\theta}}}, (O_l)^{L}_{l=1}, \vec\alpha)$. The quantity $\vec\alpha \in \mathbb R^{M \times L}$ is a tunable weight matrix. For every fixed $\vec\alpha$, the generative model is given by the EVS specified through the triple $
(\mathbb{P}_z, U_{\vec{\vec{\theta}}}, (A_m)^{M}_{m=1})$, where $A_m = \sum\limits_{l = 1}^L \vec\alpha_{m, l} O_l$.
\end{definition}

This formulation separates the quantum (EVS) and classical (linear combination) contributions in OT-EVS, enabling precise evaluation of each component's role. In most practical settings, making the observables tunable enhances the model expressivity. We formally defining the notion of \textit{relative expressivity} as: 

\begin{definition}[Relative Expressivity] Let $G_{\vec{\vec{\theta}}}$ and $H_{\vec{\phi}}$ be two parametric families of random variables with finite variances. We define $S_G \coloneq \{G_{\vec{\vec{\theta}}}\colon \mathbb{R}^K \rightarrow\mathbb{R}^M | \vec{\vec{\theta}}\in \mathbb{R}^{d}\}$ and $S_H$ similarly. We say:
\begin{nobullet}
    \item $G_{\vec{\vec{\theta}}}$ is at least as expressive as $H_{\vec{\phi}}$, if $\forall h \in S_H, \; h \in S_G$. We write $H_{\vec{\phi}} \preceq G_{\vec{\vec{\theta}}}$. $\preceq$ is a non-strict order.
    \item $G_{\vec{\vec{\theta}}}$ is strictly more expressive than $H_{\vec{\phi}}$, if $H_{\vec{\phi}} \preceq G_{\vec{\vec{\theta}}}$ and $\exists g \in S_G,\; g \notin S_H$. We write $H_{\vec{\phi}} \prec G_{\vec{\vec{\theta}}}$. $\prec$ is a strict partial order.
    \item $G_{\vec{\vec{\theta}}}$ is as expressive as $H_{\vec{\phi}}$, if $H_{\vec{\phi}} \preceq G_{\vec{\vec{\theta}}}$ and $G_{\vec{\vec{\theta}}} \preceq H_{\vec{\phi}}$. We write $H_{\vec{\phi}} \cong G_{\vec{\vec{\theta}}}$. $\cong$ is the identity relation.
\end{nobullet}
\end{definition}

A few statements then follow naturally. 

\begin{prop}[Expressivity never decreases using tunable observables] \label{prop: never decrease} For any OF-EVS $H_{\vec{\theta}} \coloneq (\mathbb{P}_z, U_{\vec{\vec{\theta}}}, (Q_m)^M_{m=1})$ and any OT-EVS $G_{\vec{\theta}, \vec{\alpha}} \coloneq (\mathbb{P}_z, U_{\vec{\vec{\theta}}}, (O_l)^L_{l=1}, \vec\alpha)$, 
\begin{equation}
    \forall 1\leq m \leq M, \exists 1\leq l \leq L, Q_m = O_l \; \Longrightarrow \; H_{\vec{\phi}} \preceq G_{\vec{\theta}}
    \end{equation}
\end{prop}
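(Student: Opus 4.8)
The plan is to establish the set inclusion $S_H \subseteq S_G$ directly, by exhibiting for each member of $S_H$ an explicit choice of OT-EVS parameters that reproduces it exactly. Since relative expressivity is defined purely through the realizable function families, a pointwise equality of the generating maps $G_{\vec{\theta}, \vec{\alpha}} = H_{\vec{\theta}}$ for a suitable weight matrix $\vec{\alpha}$ immediately yields $H_{\vec{\phi}} \preceq G_{\vec{\theta}}$.

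First I would use the hypothesis to fix, for each output index $1 \leq m \leq M$, an index $l(m)$ with $Q_m = O_{l(m)}$. This defines a distinguished weight matrix $\vec{\alpha}^*$ via $\vec{\alpha}^*_{m,l} = \delta_{l, l(m)}$, i.e., a single unit entry in each row. The one computation needed rests on the linearity of the expectation value in the observable: for any state $\ket{\psi_{\vec{\theta}}(\vec{z})}$,
\begin{equation}
\bra{\psi_{\vec{\theta}}(\vec{z})} \sum_{l=1}^L \vec{\alpha}^*_{m,l} O_l \ket{\psi_{\vec{\theta}}(\vec{z})} = \sum_{l=1}^L \vec{\alpha}^*_{m,l} \bra{\psi_{\vec{\theta}}(\vec{z})} O_l \ket{\psi_{\vec{\theta}}(\vec{z})} = \bra{\psi_{\vec{\theta}}(\vec{z})} Q_m \ket{\psi_{\vec{\theta}}(\vec{z})}.
\end{equation}

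Collecting these $M$ components shows that, with $\vec{\alpha}$ frozen to $\vec{\alpha}^*$ and the circuit parameter set to the same $\vec{\theta}$, the OT-EVS map agrees with the OF-EVS map on every latent input $\vec{z}$, so $G_{\vec{\theta}, \vec{\alpha}^*} = H_{\vec{\theta}}$ as random vectors. Because $S_G$ ranges over all admissible pairs $(\vec{\theta}, \vec{\alpha})$ while $S_H$ ranges only over $\vec{\theta}$, every $h = H_{\vec{\theta}} \in S_H$ is recovered as $G_{\vec{\theta}, \vec{\alpha}^*} \in S_G$, establishing the inclusion and thus the claim.

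I do not anticipate a genuine obstacle, since the statement is essentially a containment that follows from OT-EVS being constructed as a linear generalization of OF-EVS. The only point requiring care is interpreting $S_G$ correctly: one must treat $\vec{\alpha}$ as part of the OT-EVS parameter set, so that restricting it to a particular value is a legitimate way to realize a member of $S_G$. Once that is granted, the argument is purely definitional, and the finite-variance requirement is inherited automatically because the realized map is identical to the OF-EVS one.
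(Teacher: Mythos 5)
Your proof is correct and takes exactly the approach the paper intends: the paper states this proposition as one that ``follows naturally'' from the definitions and gives no separate argument, and your construction of the one-hot weight matrix $\vec{\alpha}^*_{m,l} = \delta_{l,l(m)}$, which by linearity makes the OT-EVS observables coincide with $(Q_m)_{m=1}^M$ and hence $G_{\vec{\theta},\vec{\alpha}^*} = H_{\vec{\theta}}$ pointwise, is precisely that definitional containment. Your remark that $\vec{\alpha}$ must be treated as part of the OT-EVS parameter set when forming $S_G$ is the right (and only) point of care.
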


\begin{prop}[Expressivity of a universal generative model cannot increase further] \label{prop:no increase universal}
For any OF-EVS $H_{\vec{\theta}} \coloneq (\mathbb{P}_z, U_{\vec{\vec{\theta}}}, (Q_m)^M_{m=1})$ and any OT-EVS $G_{\vec{\theta}, \vec{\alpha}} \coloneq (\mathbb{P}_z, U_{\vec{\vec{\theta}}}, (O_l)^L_{l=1}, \vec\alpha)$, 
\begin{equation}
    G_{\vec{\theta}, \vec{\alpha}} \text{ is universal and } \forall 1\leq m \leq M, \exists 1\leq l \leq L, Q_m = O_l \; \Longrightarrow \; H_{\vec{\phi}} \cong G_{\vec{\theta}}
    \end{equation}
\end{prop}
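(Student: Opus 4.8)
The plan is to split $\cong$ into its two defining inclusions and to obtain the easy half for free from Proposition~\ref{prop: never decrease}. The hypothesis $\forall m\,\exists l,\ Q_m=O_l$ is exactly the premise of that proposition, so it immediately gives $H_{\vec\phi}\preceq G_{\vec\theta}$, i.e.\ $S_H\subseteq S_G$. Everything therefore reduces to proving the reverse inclusion $S_G\subseteq S_H$, after which $S_G=S_H$ and hence $H_{\vec\phi}\cong G_{\vec\theta}$.

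For the reverse inclusion I would first make universality operational: following \cite{barthe_expressivity_2024}, I read ``universal'' as the statement that the reachable set of output laws coincides with the whole space $\mathcal D_M$ of admissible (finite-variance, appropriately bounded-support) $M$-dimensional distributions on $\mathcal Y\subseteq\mathbb R^M$. Equivalently, a universal model is a top element of the order $\preceq$. Two trivial facts then sit on the table: any EVS or OT-EVS built from the bounded observables $(O_l)$, and from bounded $\vec\alpha$-combinations of them, outputs only admissible laws, so both $S_G\subseteq\mathcal D_M$ and $S_H\subseteq\mathcal D_M$ hold automatically; and the $\vec\alpha$-layer, being a fixed linear map, cannot enlarge the output class beyond $\mathcal D_M$. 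Combining a maximal reachable set with the one-sided containment from Proposition~\ref{prop: never decrease} is what closes the argument: once the relevant reachable set is pinned to all of $\mathcal D_M$, the chain $S_G\subseteq\mathcal D_M\subseteq S_H\subseteq S_G$ forces equality in a single line.

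I expect the genuine obstacle to be precisely the propagation of maximality to the $H$-side, i.e.\ certifying that the reachable set on the right of the squeeze really is all of $\mathcal D_M$ and not a proper subset. Proposition~\ref{prop: never decrease} only supplies containment in one direction, so the weight of the proof rests on showing that universality together with the shared circuit $U_{\vec{\vec{\theta}}}$ (which universality forces to be maximally expressive, e.g.\ infinite depth in the sense of \cite{barthe_expressivity_2024}) leaves the $(Q_m)$ rich enough to reach every admissible law. The delicate point to guard against is degeneracy among the selected observables, since a careless index map $m\mapsto l$ could collapse output coordinates and break the reverse inclusion; I would therefore make explicit whatever non-degeneracy of $(Q_m)$ is implicitly needed and verify it is compatible with the universality hypothesis. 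The remaining measure-theoretic bookkeeping (absolute continuity w.r.t.\ the reference measure and finiteness of second moments, so that the trivial inclusions into $\mathcal D_M$ are valid) I expect to be routine.
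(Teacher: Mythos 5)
Your decomposition has the right skeleton: Proposition~\ref{prop: never decrease} gives $S_H\subseteq S_G$ for free, and everything hinges on the reverse inclusion $S_G\subseteq S_H$. But your proposal never establishes that inclusion --- you correctly flag the middle link $\mathcal{D}_M\subseteq S_H$ of your squeeze as ``the genuine obstacle'' and then leave it as a verification task. That obstacle is not a technicality. Under the reading you adopt (universality attached to $G$, as the displayed formula literally says), the reverse inclusion is not derivable at all, and the route you sketch --- using $G$'s universality plus the shared circuit to argue the $(Q_m)$ are ``rich enough'' --- fails on a scaling obstruction: for any circuit whatsoever, $|\bra{\psi}Q_m\ket{\psi}|\leq\norm{Q_m}_\infty$, so every law in $S_H$ is supported in a fixed box, whereas a universal $G$ reaches laws of arbitrary scale by tuning $\vec{\alpha}$ (take $Q_1=O_1$ a single Pauli string: no amount of circuit expressivity lets $\langle Q_1\rangle$ leave $[-1,1]$). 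So under the literal hypothesis the claim is false (modulo vacuous cases), not merely hard; your ``delicate point'' about degenerate index maps is, by contrast, a red herring --- repeated observables among the $(Q_m)$ are harmless and already handled by Proposition~\ref{prop: never decrease}.

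The way out, and what the paper intends, is that the universality hypothesis belongs to the fixed-observable model $H$, not to $G$. The paper gives no explicit proof (the propositions are introduced with ``a few statements then follow naturally''), but its title for this proposition (``expressivity of a universal generative model cannot increase further'') and the follow-up sentence (``the extremal case of universal EVS'') both refer to the OF-EVS being universal; the displayed formula contains a typo, consistent with the other notational slips in that line ($H_{\vec{\phi}}$ versus $H_{\vec{\theta}}$). Under that reading your squeeze closes immediately, but anchored on the opposite side from where you put it: every element of $S_G$ is an admissible finite-variance law, universality of $H$ says $S_H$ exhausts that class, hence $S_G\subseteq S_H$, and Proposition~\ref{prop: never decrease} supplies $S_H\subseteq S_G$. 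Your instinct that the maximality must sit on the $H$-side of the chain was exactly right; the resolution is that it is \emph{hypothesized} there, not derived from properties of $G$.
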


Despite the existence of the extremal case of universal EVS depicted in Proposition \ref{prop:no increase universal}, OT-EVSs will generally be
strictly more expressive than their analogous OF-EVSs in practice. We provide two characteristic examples below. The verification of these examples is given in the Supplementary Notes.

\begin{exmp}
\label{exmp:toy_exp}
Consider the two-qubit circuit $U_{\vec{\theta}}(\vec{z})=R_Y(z_1 + \theta_1) \otimes R_Y(z_2 + \theta_2)$ with $z_1, z_2 \sim \mathbb{P}_z = \mathcal{U}[- \pi, \pi]$ and $\theta_1, \theta_2 \in \mathbb{R}$. Consider the OF-EVS $(\mathbb{P}_z, U_{\vec{\vec{\theta}}}, (O_1, O_2))$, where $O_1 = X_1 I_2$ and $O_2 = I_1 X_2$, and the OT-EVS $ (\mathbb{P}_z, U_{\vec{\vec{\theta}}}, (O_1, O_2), \vec{\alpha})$, where $\vec{\alpha}\in \mathbb{R}^{2 \times 2}$. Then, the OT-EVS is strictly more expressive than the OF-EVS.
\end{exmp}

\begin{exmp}
\label{exmp:min_rand}
Consider an OF-EVS $(\mathbb{P}_{z}, U_{\vec{\vec{\theta}}}, (O_1))$ that $\forall \vec{\theta} \in \Theta$,
$U_{\vec{\theta}} (\vec z) \ket{0}$ follows the Haar-random distribution over $n$-qubit quantum states, $n>1$. Consider an OT-EVS $(\mathbb{P}_z, U^\prime_{\vec{\vec{\theta}}},(O_1, O_2), \vec\alpha)$, where $\vec{\alpha} \in \mathbb{R}^{1\times2}$, which satisfies the constraint $\forall V \in \mathcal{SU} (2n), V O_1 V\dagger \neq O_2. $ Then, the OT-EVS is strictly more expressive than the OF-EVS.
\end{exmp}

\subsection*{Adapted Adversarial Training} 
We train OT-EVS within the Wasserstein GAN (WGAN) framework \cite{gulrajani_improved_2017, arjovsky_wasserstein_2017}, which aims to minimize the Wasserstein-1 distance between the generated ($\mathbb{P}$) and target ($\mathbb{Q}$) data distributions using the Kantorovich-Rubinstein duality. 
\begin{definition}[Kantorovich-Rubenstein duality~\cite{kantorovich_mathematical_1960}]\label{def:wasserstein_distance} The Wasserstein-1 distance between two probability distributions $\mathbb{P}$ and $\mathbb{Q}$ in $\R^m$ is:
\begin{equation}\label{eq.Wasserstein distance simplified}
    \mathcal{W}_1(\mathbb{P}, \mathbb{Q}) = \sup_{\|D\|_{L} \leq 1} \left( \int_{\mathcal{X}} D(x) \mathrm{d} \mathbb{P} - \int_{\mathcal{Y}} D(y) \mathrm{d}\mathbb{Q} \right),
\end{equation}
where $\mathcal{X}$ and $\mathcal{Y}$ are the supports of $\mathbb{P}$ and $\mathbb{Q}$, and the supremum is taken over the set of $1$-Lipschitz functions $D$. 
\end{definition}

The WGAN framework consists of two competing networks, a generator $G_{\vec{\theta}, \vec{\alpha}}$ that learns to synthesize data and a critic $D_{\vec{w}}$ (the parameterized counterpart of $D$ in Definition \ref{eq.Wasserstein distance simplified}), which maps data samples to real values. The critic is trained to minimize the loss function $\mathcal{L}_C$ in order to approximate $-\mathcal{W}_1(\mathbb{P}, \mathbb{Q})$, as proposed in Ref. \cite{gulrajani_improved_2017}. \begin{equation}
\mathcal{L}_C(\vec{w};\vec{\theta}, \vec{\alpha}) =  \frac{1}{B}\sum_{i=1}^B \left[ D_{\vec{w}}(\vec{y}^{(i)}) - D_{\vec{w}}(\vec{x}^{(i)}) + \lambda \left(\norm{\nabla_{\hat{\vec{x}}}D_{\vec{w}}(\hat{\vec{x}}^{(i)})}_2 -1\right)^2 \right], 
\label{eq:lc}
\end{equation}
where $B$ is the batch size, $\vec y^{(i)} = G_{\vec{\theta} \vec{\alpha}}(\vec{z}^{(i)})$ and $\vec x^{(i)}$ are the generated and training data respectively. The additional gradient penalty term (with $\lambda > 0$, $\hat{\vec{x}}^{(i)} = \varepsilon \vec{x}^{(i)} + (1-\varepsilon) \vec{y}^{(i)}$ and $\varepsilon \sim \mathcal{U}[0,1]$) softly enforces the Lipschitz constant required by the Kantorovich-Rubinstein duality, penalizing deviations of the critic's gradient norm from unity. While the critic could employ a PQC architecture akin to quantum discriminators in QGANs~\cite{dallaire-demers_quantum_2018, hu_quantum_2019, romero_variational_2021, huang_experimental_2021}, we use classical neural networks exclusively for the critic to focus on the aspects of the OT-EVS. The generator is trained to minimize the loss function $\mathcal{L}_G$,
\begin{equation}
     \mathcal{L}_G(\vec{\theta}, \vec{\alpha} ; \vec{w}) = - \frac{1}{B}\sum_{i=1}^B  D_{\vec{w}}(\vec{y}^{(i)}),
     \label{eq:lg}
\end{equation} 
a simplified version of $\mathcal{W}_1(\mathbb{P}, \mathbb{Q})$ that only includes the term dependent on the generator parameters. This objective drives the generated distribution toward the target distribution. 

The two networks are trained alternately, one fixed while the other is updated. In the default setting, we treat $\vec{\alpha}$ and $\vec{\theta}$ as a single set of parameters and update them jointly. However, motivated by resource efficiency, we propose two modified training schemes in which $\vec{\alpha}$ is updated more frequently than $\vec{\theta}$. This approach reduces the consumption of quantum resources (which are expensive in terms of time and energy) while compensating with increased classical resource usage. Specifically, each update of the quantum parameters $\vec{\theta}$ requires $2N_dN_sB$ quantum circuit executions due to the parameter shift rule \cite{mitarai_quantum_2018, schuld_evaluating_2019}, where $N_d$ is the dimension of $\vec{\theta}$ and $N_s$ is the number of measurements per sample. In contrast, updating $\vec{\alpha}$ and $\vec{w}$ only involves $N_sB$ measurements per forward pass, making it significantly less resource-intensive.
We summarize the key differences between the three methods below, while the complete pseudocodes are provided in the Supplementary Notes.

\begin{nobullet}
    \item \textit{Joint:} $\vec{\alpha}$ and $\vec{\theta}$ are updated simultaneously. Each training iteration performs $N_{\vec{w}} \geq 1$ updates to $\vec{w}$, followed by a single joint update to $\vec{\alpha}$ and $\vec{\theta}$.
    
    \item \textit{Asynchronous:} $\vec{\alpha}$ is updated more frequently than $\vec{\theta}$. As in \textit{Joint}, each iteration first performs $N_{\vec{w}}$ updates to $\vec{w}$. Then, $\vec{\alpha}$ is updated $N_{\vec{\alpha}} > 1$ times, followed by one update to $\vec{\theta}$.
    
    \item \textit{Decoupled:} $\vec{\alpha}$ is decoupled with $\vec{\theta}$ but updated jointly with $\vec{w}$. Each iteration consists of $N_{\vec{\alpha}}$ loops, where each loop performs $\lceil N_{\vec{w}} / N_{\vec{\alpha}} \rceil$ updates to $\vec{w}$ followed by one update to $\vec{\alpha}$. After all loops, $\vec{\theta}$ is updated once.
\end{nobullet}

\subsection*{Shadow-Frugal Parameterization of Observables}\label{sec:shadow-frugal}
OT-EVS requires estimating the expectation values with sufficient accuracy in both the training and sampling modes. To minimize sample complexity, we propose restricting $(O_l)^{L}_{l=1}$ to maximally $k$-local Pauli string with $k \in \mathcal{O}(\operatorname{polylog}(n))$. This restriction allows efficient parallel estimation of expectation values using classical shadows. We refer to this strategy as the shadow-frugal parameterization of observables. 
\begin{definition}
    [Shadow-Frugal Parameterization] An OT-EVS $(\mathbb{P}_z, U_{\vec{\vec{\theta}}}, (O_l)^{L}_{l=1}, \vec\alpha)$ employs shadow-frugal parameterization if all observables $O_l$ are $k$-local Pauli strings with $k \in \mathcal{O}(\operatorname{polylog}(n))$.  
 \label{def:shadow frugal}
\end{definition}

To align with the training objective, we define the sample complexity as the number of measurements needed to ensure $\mathcal{W}_1(\mathbb{P}^B, \tilde{\mathbb{P}}^B) < \epsilon$ with high probability, where $\mathbb{P}^B$ and $\tilde{\mathbb{P}}^B$ denote the empirical distributions of a batch of generated samples with and without measurement errors, respectively. 

\begin{theorem}[Sample Complexity]\label{thm:sample_complexity}  
Let $L$ be the number of $k$-local Pauli strings. Let $B$ be the batch size. Denote by $\mathbb{P}^B$ the empirical noiseless distribution of a batch of samples generated from an OT-EVS with shadow-frugal parameterization, and by $\tilde{\mathbb{P}}^B$ the distribution with measurement error. We assume $\norm{\vec\alpha}_{\infty} \leq T$, implying $\norm{A_m}_{\infty}\leq T$, $\forall 1\leq m \leq M$, where $A_m = \sum_{l = 1}^L \vec\alpha_{m, l} O_l$. For $\epsilon, \delta >0$, the following probabilistic bound
\begin{equation}
\operatorname{Pr} \left(\mathcal{W}_1(\mathbb{P}^B, \tilde{\mathbb{P}}^B) \leq \epsilon \right) > 1- \delta
\end{equation}
holds with measurements $N_s$,
\begin{align}
    N_s &\geq 68 \cdot3^k \left\lceil B \frac{T^2}{\epsilon^2} \log \left( \frac{2LB}{\delta} \right) \right\rceil \text{ (with classical shadows)} \label{shadow equation} \\
    N_s &\geq 2 L \left\lceil B \frac{T^2}{\epsilon^2} \log \left( \frac{2LB}{\delta} \right) \right\rceil  \label{conventional equation} \text{(with the conventional measurements)}
\end{align}
\end{theorem}
\begin{proof}
    See the Supplementary Notes.
\end{proof}
The total number of $k$-local Pauli strings with $k \in \mathcal{O}(\operatorname{polylog}(n))$ scales superpolynomially as
\begin{equation}  
L = \sum_{j=0}^{k}{n\choose j}3^{j} \in 2^{\tilde{\Theta} (\log (n)^c)}
\label{eq: L}
\end{equation} with $\Tilde{\Theta}(f(n)) = \Theta(f(n) \cdot \log^k n)$). Theorem \ref{thm:sample_complexity} then implies that a total of $\mathcal{O}(\poly{n}/\epsilon^2)$ measurements is sufficient to estimate all expectation values accurately using classical shadows (equation (\ref{eq: L}) $\rightarrow$ equation (\ref{shadow equation})), which yields a super-polynomial resource advantage over the conventional measurement scheme (equation (\ref{eq: L}) $\rightarrow$ equation (\ref{conventional equation})). 

\subsection*{Numerical Experiments}
We empirically evaluate OT-EVS in the shadow-frugal parameterization (see ``Methods'' and Supplementary Notes for experimental details). All experiments use the same critic architecture for consistency: a three-layer multi-layer perceptron (MLP) with $512$ neurons per hidden layer and ReLU activations. This architecture, typical in WGAN implementations \cite{gulrajani_improved_2017, arjovsky_wasserstein_2017}, provides sufficient capacity for our tasks. Based on previous theoretical results, we formulate two hypotheses below:
\begin{hyp}
For a fixed quantum resource budget, the \textit{Asynchronous} and \textit{Decoupled} approaches typically achieve better performance than the \textit{Joint} approach.
\label{hyp: faster convergence}
\end{hyp}

\begin{hyp}
When training OT-EVS in the shadow-frugal parameterization, the classical shadows approach achieves comparable performance with fewer measurements than the conventional measurement scheme.
\label{hyp: shadow better}
\end{hyp}

To verify these hypotheses, we first conduct controlled experiments where the training dataset is generated by an OT-EVS with random parameters, after which an identically structured OT-EVS (randomly initialized) learns to reproduce the distribution, following the methodology of Ref. \cite{romero_variational_2021}. This setting ensures that the models are sufficiently expressive to represent the target distribution exactly, providing an ideal testbed for comparing training methods and analyzing shot noise effects. We systematically evaluate all combinations of the three training methods, two measurement methods, and nine distinct measurement budgets, with each configuration tested over $20$ independent trials. Model performance is quantified by the Kullback-Leibler (KL) divergence after $50,000$ training iterations. 
\begin{figure}[t]
  \begin{center}    \centerline{\includegraphics[width=0.8\columnwidth]{Figure_1.pdf}}
    \caption{Training performance of OT-EVS with a two-layer $8$-qubit \textit{sequential} circuit using the (a) \textit{Joint} (b) \textit{Asynchronous} (c) \textit{Decoupled} method on the synthetic dataset. The interquartile mean and bootstraped $95\%$ confidence intervals over $20$ trials for the estimated KL divergence after $50k$ training iterations are shown.}
    \label{figure1}
  \end{center}
\end{figure}
Figure \ref{figure1} presents results for an OT-EVS with a two-layer $8$-qubit sequential circuit and $2$-local observables, generating $8$-dimensional data (results for additional architectures are in the Supplementary Notes). We observe that insufficient measurement budgets impair model convergence and that classical shadows achieve performance comparable to the conventional scheme with at least four times fewer measurements. In addition, for most measurement settings, the \textit{Asynchronous} and \text{Decoupled} methods outperform the \textit{Joint} method, supporting Hypothesis \ref{hyp: faster convergence} and Hypothesis \ref{hyp: shadow better}. Interestingly, optimal performance occurs at intermediate measurement budgets. While this phenomenon appears consistently in our experiments, its generalizability requires further investigation. We hypothesize that this non-monotonic pattern arises from an implicit regularization effect: 
\begin{hyp}
Moderate shot noise (measurement error) improves the alignment of the generated and the target distributions in dimensionality, thereby stabilizing the training, analogous to the beneficial role of noise injection in classical GANs \cite{salimans_improved_2016, brock_large_2018, karras_style-based_2021, sonderby_amortised_2016, arjovsky_towards_2022, roth_stabilizing_2017, feng_understanding_2021}.
\end{hyp}
Next, we systematically compare OT-EVS and OF-EVS across different architectural configurations to assess the role of tunable observables. The training datasets comprise $8$-dimensional data generated by OT-EVS models with one-layer $8$-qubit (sequential or brickwork) circuits and $1$-local general observables. For each dataset, we train models with the same circuit ansatz but vary eight circuit depths and three observable settings: tunable observables matching those used to generate the dataset but randomly initialized, fixed observables matching those used to create the dataset, and the fixed $1$-local Pauli-Z observables (a standard choice in EVS literature \cite{romero_variational_2021, anand_noise_2021}), supplemented with global scaling and translation parameters to postprocess raw outputs to compensate the natural boundedness of Pauli-Z. Each configuration is evaluated over $20$ independent trials, with results shown in Figure \ref{figure2}.
\begin{figure}[h]
  \begin{center}    \centerline{\includegraphics[width=0.8\columnwidth]{Figure_2.pdf}}
    \caption{Training performance of OT-EVS and OF-EVS with varying numbers of circuit ansatz layers on synthetic datasets. (a) Sequential ansatz. (b) Brickwork ansatz. The interquartile mean and bootstraped $95\%$ confidence intervals over $20$ trials for the estimated KL divergence after $50k$ training iterations are shown.}
    \label{figure2}
  \end{center}
\end{figure}
Contrary to concerns about the \textit{barren plateau} phenomenon \cite{mcclean_barren_2018, ragone_lie_2024} in deep PQCs, we observe no apparent degradation in training performance with increasing circuit depth. We note that both ansatzes are designed such that neighboring layers can cancel each other as identity for specific parameter values, ensuring a monotonic increase of expressivity with depth. OT-EVS exhibits a clear advantage: models with tunable observables consistently achieve lower KL divergence values than their fixed-observable counterparts at equivalent depths. Surprisingly, the added learning of weight matrices in OT-EVS facilitates training rather than hinders it. The persistent performance gap between OF-EVS models with different observable choices underscores that circuit expressivity alone cannot fully compensate for the flexibility of tunable observables. 

We further evaluate the models on the MNIST dataset \cite{lecun_gradient-based_1998}, comprising $60,000$ grayscale images ($32\times32$ pixels). Models are trained to learn $8$-dimensional representations of the data from a pretrained autoencoder \cite{rumelhart_parallel_1986, bank_autoencoders_2021}, a standard machine learning technique. We use a two-layer $8$-qubit brickwork circuit with either tunable $1$-local general observables (trained via the \textit{Decoupled} method with classical shadows or conventional measurements), or fixed $1$-local Pauli-Z observables (with global scaling and translation), under varying measurement budgets. Model performance after $5000$ training iterations (averaged over $5$ independent trials) and generated samples are shown in Figure \ref{figure3}. Consistent with controlled experiments, best performances for both models are achieved at intermediate measurement budgets, and classical shadows reduce the optimal measurement budget of the OT-EVS by at least fourfold. 

\begin{figure}[h]
  \begin{center}    \centerline{\includegraphics[width=0.8\columnwidth]{Figure_3.pdf}}
    \caption{(a) Training performance of OT-EVS with $1$-local general observables (using the \textit{Decoupled} method) and OF-EVS with $1$-local Pauli-Z observables with different numbers of measurements on the MNIST dataset. The mean and standard deviation over $5$ trials are shown. Samples generated by an $1$-local general observable OT-EVS trained with $2^2L$ (b) and $2^{10}L$ (c) conventional measurements. Samples generated by an $1$-local general observable OT-EVS trained with $2^2L$ (d) and $2^{10}L$ (e) classical shadows.}
    \label{figure3}
  \end{center}
\end{figure}

Finally, we examine whether enlarging the set of observables improves OT-EVS performance. We benchmark on both MNIST and the more challenging Fashion-MNIST dataset \cite{xiao_fashion-mnist_2017}, embedded in $32$-dimensional representations by a pretrained autoencoder. We use two-layer brickwork circuits with three different tunable observable choices: $1$-local Pauli-Z observables, $1$-local general observables or $2$-local general observables (trained via the \textit{Decoupled} method with $2^{10}L$ conventional measurements). For a classical baseline we train a three-layer MLP with $512$ neurons per hidden layer and ReLU activations, which mirrors the critic network architecture. To test scalability, we investigate the performance under different input dimensions $K$ for the latent variables $\vec{z}$ (Definition \ref{def:evs}), which equals the qubit number in the brickwork circuit, and report performance after $5000$ training iterations, averaged over five independent trials (Figure \ref{figure4}). On both datasets and across all input dimensions, the $2$-local OT-EVS consistently outperforms the $1$-local OT-EVS, which in turn outperforms the Pauli-Z OT-EVS, demonstrating that richer observable sets enhance model expressivity. The $2$-local OT-EVSs also achieve performances comparable with the classical MLP on both datasets. Nevertheless, we caution against interpreting these results as evidence of quantum advantage for larger problem instances, which would require further investigation into EVS circuit architectures and more comprehensive comparisons with a broader range of classical neural network models beyond MLPs.

\begin{figure}[h!]
  \begin{center}    \centerline{\includegraphics[width=0.8\columnwidth]{Figure_4.pdf}}
    \caption{Training performance of OT-EVS with $1$-local Pauli-Z observables, $1$-local general observables or $2$-local general observables (using the \textit{Decoupled} method) and a three-layer $512$-neuron MLP on the MNIST dataset (a) and Fashion-MNIST dataset (b). Samples generated by a $12$-qubit $1$-local Pauli-Z observable OT-EVS (c), a $12$-input MLP (d), a $12$-qubit $1$-local general observable OT-EVS (e), and a $12$-qubit $2$-local general observable OT-EVS (f) for the MNIST dataset. Samples generated by a $12$-qubit $1$-local Pauli-Z observable OT-EVS (g), a $12$-input MLP (h), a $12$-qubit $1$-local general observable OT-EVS (i), and a $12$-qubit $2$-local general observable OT-EVS (j) for the Fashion-MNIST dataset.}
    \label{figure4}
  \end{center}
\end{figure}

\section{Discussion}
\label{sec:discussion}
This work proposes a quantum generative model extending the Expectation Value Sampler using tunable observables. 
The model outputs random variables, given by the expectation values of an internal random state. 
Ideally, the outcomes follow a target probability distribution accessible through data. 
The tunable-observable extension enhances the expressivity of the model while reducing the quantum resources required for training, compared to fixed-observable models. These improvements are demonstrated both analytically and empirically. Our contributions are as follows. 
a) The expressivity enhancement is demonstrated by showing that fixed-observable models are included in tunable-observable models, not the converse. 
b) We can improve sample complexity by restricting our observables to those efficiently estimated using classical shadows. 
We also provide upper bounds on the number of measurements required to estimate the output probability distribution up to a predefined precision.
c) We propose two new adversarial training procedures with improved performances while maintaining the same requirements of quantum resources, by fine-tuning the update rate of parameters in the observables and the quantum circuits.

We showcase the performance of tunable-observable models by conducting numerical experiments. First, tunable-observable models outperform fixed-observable models in practice, even in data generated by fixed-observable models. Second, the tailored training procedures and the classical-shadow sampling strategy reduce the overall training cost. 
Furthermore, the training procedures exhibit an interesting phenomenon: moderate shot noise levels can improve training performance, similar to the noise injection in classical GAN training. 

Interesting foreseeable research directions include investigating the performance of tunable-observable models on low-dimensional, real-world data sets. Technically valuable contributions are a theoretical analysis of the convergence properties and stability of the proposed training procedures and a numerical characterization of the shot noise level that benefits the model's performance.

\section{Methods}\label{sec:methods}
\subsection*{Measurement Protocols for Expectation Value Estimation}
\label{subsec: eve}
The expectation values $(\braket{\psi|O_l|\psi})^{L}_{l=1}$ can be estimated through different strategies. In the most straightforward approach, the conventional method, each observable $O_l$ is measured independently. This involves preparing $N_c$ copies of $\ket{\psi}$, applying a circuit that diagonalizes $O_l$ in the device's measurement basis to each copy, and performing a projective measurement. The expectation value is unbiasedly estimated as the sample mean of the outcomes, with statistical accuracy scaling as $\mathcal O_p(1/\sqrt{N_c})$. 

For the shadow-frugal parameterization, where all $(\braket{O_l})^{L}_{l=1}$ are $k$-local Pauli strings, more sophisticated techniques may be employed. These include importance sampling, which prioritizes measurements of higher-weight Pauli strings \cite{wecker_towards_2015, mcclean_theory_2016}, and simultaneous measurements protocols that exploit groupings of commuting observables \cite{mcclean_theory_2016, kandala_hardware-efficient_2017}. While optimal grouping for maximal sample efficiency is known to be NP-hard, heuristic approaches have been extensively explored in literature \cite{gokhale_minimizing_2019, zhao_measurement_2020, verteletskyi_measurement_2020, hamamura_efficient_2020, crawford_efficient_2021}. However, the dynamic nature of the weight matrix $\vec{\alpha}$ during OT-EVS training introduces additional complexity, as it would necessitate adaptive grouping strategies. For this reason, we exclude such advanced methods from both Theorem \ref{thm:sample_complexity} and our numerical experiments.

An alternative strategy for the shadow-frugal parameterization is the classical shadows \cite{huang_predicting_2020}. Given a quantum state $\ket{\psi}$, the procedure for estimating $(\braket{O_l})^{L}_{l=1}$ via classical shadows proceeds as follows: First, a unitary $U=U_1\otimes U_2\cdots\otimes U_n$ is randomly selected from the Pauli ensemble, where each $U_j$ acts on a single qubit, and applied to $\ket{\psi}$. A computational basis measurement is then performed, yielding an outcome $\hat{b}\in \{0,1\}^n$. Letting $\rho = \ket{\psi}\bra{\psi}$, the average effect of this process is described by a quantum channel:
\begin{equation}
    \mathcal{M}(\rho) = \mathbb E\left[ U^\dagger |\hat{b}\rangle\langle \hat{b}| U \right].
\end{equation}
For a specific outcome $\hat{b}$, the classical snapshot $\hat{\rho}(\hat b)$ is obtained by inverting this channel:
\begin{align}\label{eq.shadow}
\hat{\rho}(\hat b) \coloneq \mathcal M^{-1}\left( U^\dagger |\hat{b}\rangle\langle \hat{b}| U \right)=\bigotimes_{j=1}^n \left(3U_j^\dagger |\hat{b}\rangle\langle \hat{b}| U_j-\mathbb{I}\right). 
\end{align}
This estimator is unbiased, satisfying $\mathbb{E}[\hat{\rho}(\hat{b})]=\rho$, and can be efficiently postprocessed to estimate all expectation values simultaneously. 
Further improvements in sample efficiency can be achieved by incorporating knowledge of the weight matrix $\vec{\alpha}$, such as using a locally-biased measurement distribution \cite{hadfield_measurements_2022} or derandomization techniques \cite{huang_efficient_2021}. However, these techniques are excluded from Theorem \ref{thm:sample_complexity} and our numerical experiments for simplicity.

\subsection*{Numerical Simulation of Measurement Protocols}
\label{app: shot noise}
We develop an original method to efficiently simulate conventional measurements and classical shadows in our numerical experiments. Let $\vec{y} = \left(\braket{A_1}, \dots, \braket{A_M}\right)^T$ represent an ideal sample that would be generated by OT-EVS in the absence of measurement errors. Since both the conventional measurements and the classical shadows provide unbiased estimators, we model the shot noise in either case as a centered Gaussian random vector $\vec{\epsilon} \sim \mathcal{N}(0, \Sigma/N_s)$ added to the measured observables $\left(O_l\right)^{L}_{l=1}$. Since measurement outcomes are independent and identically distributed, the normal approximation becomes accurate for sufficiently large $N_s$ due to the Central Limit Theorem. The shot noise $\vec{\epsilon}$ depends on the measured state $\ket{\psi}$ (and consequently on $\vec{z}$ and $\vec{\theta}$), but not on $\vec{\alpha}$. The noise-perturbed sample can then be expressed as:
\begin{equation}
\tilde{\vec{y}} = \vec{y} + \vec{\alpha} \vec{\epsilon}
\end{equation}

When computing the derivative of the generator loss (equation (\ref{eq:lg})) with respect to $\vec{\vec{\theta}}$, as an example, we employ the reparameterization trick \cite{kingma_auto-encoding_2022} to enable automatic differentiation. This involves decomposing the stochastic node $\vec{\epsilon}$ as:
\begin{equation}
\vec{\epsilon} = \frac{1}{N_s}S \vec{\xi}, \end{equation}
where $S$ is the lower triangular matrix from the Cholesky decomposition of the covariance matrix $\Sigma$ (i.e., $\Sigma=SS^T$) and $\vec{\xi} = \left( (\xi_l)_{l=1}^L \right)^T$ is a vector of standard Gaussian random variables. Applying the chain rule yields:
\begin{equation}
    \frac{\partial \mathcal{L}_G}{\partial \vec{\vec{\theta}}} = \frac{1}{B}\sum_{b=1}^B \left[\frac{\partial D_{\vec{w}}}{\partial \tilde{\vec{y}}^b} \left( \frac{\partial \vec{y}^b}{\partial \vec{\vec{\theta}}} + \frac{1}{N_s} \vec{\alpha} \frac{\partial S^b}{\partial \vec{y}^b} \frac{\partial \vec{y}^b}{\partial \vec{\vec{\theta}}} \vec{\xi}^b \right) \right].
\end{equation}
This formulation isolates the stochasticity in the non-parameterized nodes $\vec{\xi}^b$, allowing all computational steps to be handled by automatic differentiation. The remaining task is to derive the covariance matrix $\Sigma$ for each measurement scheme. 

For conventional measurements, where each $O_l$ is measured independently, $\Sigma$ takes a simple diagonal form: \begin{equation}
\Sigma^2_{ll}= 1- \braket{O_l}^2.
\end{equation} 
This diagonal covariance resembles the approximation method in Ref. \cite{bonet-monroig_performance_2023}, though we note that while their approach uses the exact binomial distribution, we employ the Gaussian approximation for compatibility with automatic differentiation. 

In contrast, the classical shadows produce a non-diagonal $\Sigma$ due to correlations in estimating different Pauli strings. Specifically, a measurement performed in the basis $Q=\bigotimes_{\iota=1}^n Q^\iota$ contributes to the estimation of $\braket{O_l}=\braket{\bigotimes_{\iota=1}^n O_l^\iota}$ when $O_l^\iota \in \{O_l^\iota, \mathds{1} \}, \forall 1 \leq \iota \leq n$, and not otherwise. We derive the elements of $\Sigma$ using results from Ref. \cite{huang_predicting_2020} (see equations (S9), (10), (35), (36), (50) of the reference): 
\begin{equation}
\begin{aligned}
\Sigma_{i, j}^2
&= \underset{U \sim \mathrm{Cl}(2^n)}{\mathbb{E}} \sum_{x\in \{0,1\}^n} \abs{\bra{x} U \ket{\psi_{\vec{\vec{\theta}}}(\vec{z})}}^2 \bra{x}U \mathcal{M}^{-1}(O_i)U^\dagger \ket{x}\bra{x}U \mathcal{M}^{-1}(O_j)U^\dagger \ket{x} - o_i o_j  \\
&= \left( \prod_{\iota=1}^n f(O^\iota_i, O^\iota_j) \right) o_{ij} - o_i o_j,
\end{aligned}
\end{equation}
where $o_i = \braket{O_i}$ and $o_{ij} = \braket{O_i O_j}$, and the function $f$ is defined as:
\begin{equation}
f(O^\iota_i, O^\iota_j) = \begin{cases} 
      0 & \text{if } (O_i^\iota \neq O_j^\iota) \text{ and } (O_i^\iota \neq \mathbb{I}) \text{ and }  (O_i^\iota \neq \mathbb{I})\\
      1 & \text{if } ((O_i^\iota \neq O_j^\iota) \text{ and } ((O_i^\iota = \mathbb{I}) \text{ or } (O_j^\iota = \mathbb{I}))) \text{ or } (O_i^\iota = O_j^\iota = \mathbb{I})\\
      3 & \text{if } (O_i^\iota = O_j^\iota \neq \mathbb{I})
       \end{cases}
\end{equation}
We note that the observables $O_i O_j$ required for this calculation are typically not included in the original set $(O_l)^{L}_{l=1}$, introducing additional simulation overhead for classical shadows compared to conventional measurements.

\begin{figure}[t]
  \begin{center}    \centerline{\includegraphics[width=0.8\columnwidth]{Figure_5.pdf}}
    \caption{(a) Noise-perturbed and ideal empirical densities of an example $2D$ synthetic dataset with estimated KL divergence (and measurement budget) labelled. (b) Generated distributions at selected training steps (ignoring shot noise), with estimated KL divergence labelled. (c) Example training curves (generator loss, critic loss, and estimated KL divergence) for a training using the \textit{Asynchronous} method.}
    \label{figure5}
  \end{center}
\end{figure}

\subsection*{Model Performance Evaluation}
In our numerical experiments, we employ the KL divergence as the evaluation metric for EVS models in our numerical experiments, a popular choice in the assessment of generative models \cite{borji_pros_2022, dinh_density_2016, ho_flow_2019, song_score-based_2020}: 
\begin{definition}[Kullback-Leibler divergence~\cite{kullback_information_1951}]
    Assume a probability space $(\mathcal{X}, \mathcal{A}, \mu)$. The Kullback-Leibler divergence between two absolutely continuous (w.r.t. $\mu$) probability distributions $\mathbb{P}$ and $\mathbb{Q}$ is defined as:
\begin{equation}
    \mathcal{D}_{KL}(\mathbb{P} \parallel \mathbb{Q}) = \int_{\mathcal{X}} p(x)\log\left(\frac{p(x)}{q(x)}\right) \, d\mu(x),
\end{equation}
where $p=\frac{\mathrm{d}\mathbb{P}}{\mathrm{d}\mu}$ and $q=\frac{\mathrm{d}\mathbb{Q}}{\mathrm{d}\mu}$ are the densities of $\mathbb{P}$ and $\mathbb{Q}$. 
\end{definition}

In practice, the exact densities of both the generated and target distributions are unknown. To estimate the KL divergence, we adopt an estimator based on K-nearest-neighbors \cite{wang_divergence_2009} (see equation (25) of the reference). This approach requires only samples from the training and generated distributions and provides high accuracy even for moderately high-dimensional distributions (e.g., with dimensions up to $\sim 100$).

\subsection*{Methods for Controlled Experiments}
In our controlled experiments, we generate each training dataset using a randomly initialized instance of an OT-EVS model. When preparing a dataset, the quantum circuit parameters $\vec{\theta}$ are individually drawn from a Gaussian distribution $\mathcal{N}(0, \pi/8)$, followed by a global random rotation with an angle uniformly drawn from $ U[-\pi, \pi)$. The weight matrix $\vec{\alpha}$ is structured so that each row contains $1,4,9$ at random positions, with all other elements set to $0$. This configuration ensures sufficient variability between trials while maintaining controlled experimental conditions. All datasets contain exactly $4096$ samples. 

Our synthetic datasets exhibit non-trivial statistical structures due to their projection from higher-dimensional observable spaces (with the data dimension intentionally kept smaller than the number of observables). To provide concrete insight into these distributions, we analyze a $4$-qubit OT-EVS, computing its $12$ one-local expectation values and projecting them onto a $2$-dimensional space.
Figure \ref{figure5}(a) shows the empirical densities of these projected values as histograms, comparing the ideal (infinite-measurement) with noise-perturbed (finite-measurement) distributions.
As the measurement budget increases (left to right), the generated distributions develop sharper structures, aligning well with the decreasing KL divergence between the ideal with noise-perturbed distributions. 

We train the OT-EVS on the prepared ideal dataset using the \textit{Asynchronous} method and track the evolution of the learned distributions (with infinite measurements) in Figure \ref{figure5}(b). Figure \ref{figure5}(b) plots the corresponding training curves for the generator loss, critic loss, and estimated KLD. The critic loss rapidly converges to zero, even as the generated distributions keep improving. This behavior aligns with the observations in Ref. \cite{arjovsky_wasserstein_2017} (Section 4.2), confirming that the critic loss, while useful for training, is not a good evaluation metric. In contrast, the nearly monotonic decrease of the estimated KLD closely matches the improving visual similarity between the generated and target distributions, validating its use as an effective evaluation metric in our experiments.

\section{Declarations}
\paragraph{Data Availability}
All data generated or analysed during this study can be easily reproduced using our documented code.

\paragraph{Code Availability}
\par The underlying code for this study can be accessed via this link \url{https://github.com/kevinkayyy/OT-EVS_Quantum_Generative_Model}.

\paragraph{Competing interests} The authors declare no competing interests. 

\paragraph{Author Contributions}
All authors contributed to the ideation and initial discussions. K.S., A.K., and V.D., developed the theoretical framework for model definition, observable parameterization, usage of classical shadows and expressivity analysis. K.S., A.K., and H.W. performed resource analysis and designed the training algorithm. K.S. designed the numerical simulation technique for classical shadows, developed the code, performed numerical experiments and analyzed the results.  V.D. and H.W. supervised the work.  K.S. and A.P.-S. wrote the manuscript and all authors contributed to the discussion of results and the review of the manuscript.

\paragraph{Acknowledgements}
KS thanks Riccardo Molteni, Yash Patel, Susanne Pielawa, and Peter Wegmann for insightful discussions. This research was funded by the BMW Group. 
VD thanks the support of the Dutch National Growth Fund (NGF), as part of the Quantum Delta NL programme. 
This work was also partially supported by the Dutch Research Council (NWO/OCW), as part of the Quantum Software Consortium programme (project number 024.003.03), and co-funded by the European Union (ERC CoG, BeMAIQuantum, 101124342).

\newpage
\bibliography{ref}

\newpage 
\section*{Supplementary Note 1 -- Model Expressivity}
We extend the main text's discussion on the expressivity of the Observable-Tunable Expectation Value Sampler (OT-EVS). We verify the validity of the two examples given in the main text.

In Example 1, outputs of the OF-EVS take the form: 
\begin{equation}
y_m = \sin\left( \frac{\theta_m + z_m}{2}\right) \hspace{0.3cm} \text{for } m\in \{1,2 \},
\end{equation} 
while outputs of the OT-EVS take the form $(\vec{\alpha}_{11}y_1 + \vec{\alpha}_{12}y_2, \vec{\alpha}_{21}y_1 + \vec{\alpha}_{22}y_2)$. We can attribute the increase in expressivity to two causes. First, linear combinations of $y_1$ and $y_2$ follow different distributions than $y_1$ or $y_2$ in general, even though $y_1$ and $y_2$ are identically distributed up to $\vec{\theta}$. Second, $\vec{\alpha}_{11}y_1 + \vec{\alpha}_{12}y_2$ is in general correlated with $\vec{\alpha}_{21}y_1 + \vec{\alpha}_{22}y_2$, while $y_1$ and $y_2$ themselves are always independent. 

To verify Example 2, we utilize existing theoretical tools \cite{bonet-monroig_verifying_2024}. When measuring a Haar-random state, the expectation value of a given observable $O$ is a random variable derived from the inner product of a symmetric Dirichlet random variable and the eigenvalues of the observable. Consider an observable $O_{\vec{\alpha}}$ defined by weighted sums of $O$ and some other observables. The eigenvalues of $O_{\vec{\alpha}}$ will depend on the spectra of all constituent observables and their weights. Consequently, the resulting spectrum and the underlying distribution of the expectation value differ from those of $O$ originally. 

\section*{Supplementary Note 2 -- Proof of Theorem~\ref{thm:sample_complexity} of the main text}
First, we denote by $\vec{y}^i$ and $\tilde{\vec{y}}^i$ the $i$-th sample (in a batch) generated by the shot-noise-free and shot-noise perturbed OT-EVS in shadow-frugal parameterization, respectively: 
\begin{align}
            \vec{y}^i =& \left[\tr(H_1 \rho^i), \dots, \tr(H_M \rho^i)\right]^\top = \left[\sum_{l=1}^L \vec{\alpha}_{1l} \langle O_l^i \rangle, \cdots,  \sum_{l=1}^L \vec{\alpha}_{Ml} \langle O_l^i \rangle\right]^\top = \vec\alpha \left[\langle O_1^i \rangle, \cdots, \langle O_L^i \rangle\right]^\top = \vec\alpha \langle \vec{O}^i \rangle. \\
            \tilde{\vec{y}}^i =& \left[\hat{h}^i_1, \dots, \hat{h}^i_M\right]^\top = \left[\sum_{l=1}^L \vec{\alpha}_{1l} \widehat{\langle O_l^i \rangle}, \cdots,  \sum_{l=1}^L \vec{\alpha}_{Ml} \widehat{\langle O_l^i \rangle}\right]^\top= \vec{\alpha} \left[\widehat{\langle O_1^i \rangle}, \cdots, \widehat{\langle O_L^i \rangle}\right]^\top = \vec\alpha \widehat{\langle \vec{O}^i \rangle},
\end{align}
where $\widehat{\langle O_l^i \rangle}$ is the estimation of the expectation value of the $l$-th Pauli string for the $i$-th sample by a particular measurement strategy. Next, we prove two lemmas required for Theorem~\ref{thm:sample_complexity}.
\begin{lemma} \label{probability_uneq}
    Let $L$ denote the number of $k$-local Pauli strings, and let $B$ represent the batch size. Denote $\mathbb{P}^B$ and $\tilde{\mathbb{P}}^B$ as the empirical distributions of the shot-noise-free and shot-noise-perturbed samples, respectively, generated by an OT-EVS under the shadow-frugal parameterization specified by a weight matrix $\vec\alpha$ and let $\norm{\vec\alpha}_{\infty} \leq T$. Then, for any $\epsilon > 0$, the following inequality can be established for the $\mathcal{W}_1$-distance between the shot-noise-free distribution $\mathbb{P}^B$ and the shot-noise-perturbed distribution $\tilde{\mathbb{P}}^B$:
    \begin{equation}
        \mathrm{Pr} (\mathcal{W}_1(\mathbb{P}^B, \tilde{\mathbb{P}}^B) \geq \epsilon)
            \leq \mathrm{Pr}\left(\bigcup_{i=1}^N \bigcup_{l=1}^L \left\{\lvert \langle O_l^i \rangle - \widehat{\langle O_l^i \rangle}\rvert \geq \frac{\epsilon}{T}\right\}\right).
    \end{equation}
\end{lemma}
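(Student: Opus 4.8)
The plan is to prove the final statement (Lemma~\ref{probability_uneq}) by establishing the corresponding inclusion of events deterministically and then passing to probabilities by monotonicity. Concretely, I would show
\begin{equation}
\left\{\mathcal{W}_1(\mathbb{P}^B, \tilde{\mathbb{P}}^B) \geq \epsilon\right\} \subseteq \bigcup_{i=1}^{B}\bigcup_{l=1}^{L}\left\{\left\lvert \langle P_l^i\rangle - \widehat{\langle P_l^i\rangle}\right\rvert \geq \frac{\epsilon}{T}\right\},
\end{equation}
so that $\mathrm{Pr}(\cdot)$ applied to both sides gives the claim. Equivalently, I would argue the contrapositive: assuming every single-Pauli fluctuation satisfies $\lvert \langle P_l^i\rangle - \widehat{\langle P_l^i\rangle}\rvert < \epsilon/T$, I deduce $\mathcal{W}_1(\mathbb{P}^B, \tilde{\mathbb{P}}^B) < \epsilon$. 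This reduces an optimal-transport quantity to a worst-case bound on the per-sample, per-observable measurement errors.

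First I would bound the Wasserstein distance by a single explicit transport plan. Since $\mathbb{P}^B$ and $\tilde{\mathbb{P}}^B$ are empirical measures on the paired point clouds $\{\vec{y}^i\}$ and $\{\tilde{\vec{y}}^i\}$, the diagonal coupling sending each $\vec{y}^i$ to its own perturbed counterpart $\tilde{\vec{y}}^i$ is admissible, yielding
\begin{equation}
\mathcal{W}_1(\mathbb{P}^B, \tilde{\mathbb{P}}^B) \leq \frac{1}{B}\sum_{i=1}^{B}\norm{\vec{y}^i - \tilde{\vec{y}}^i}.
\end{equation}
The same inequality follows directly from the Kantorovich--Rubinstein form (Definition~\ref{def:wasserstein_distance}): for any $1$-Lipschitz $D$ one has $\frac{1}{B}\sum_i\big(D(\vec{y}^i) - D(\tilde{\vec{y}}^i)\big) \leq \frac{1}{B}\sum_i \norm{\vec{y}^i - \tilde{\vec{y}}^i}$, and taking the supremum over $D$ preserves it.

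Next I would propagate the estimation error through the weight matrix. Using $\vec{y}^i = \vec\alpha\langle\vec{P}^i\rangle$ and $\tilde{\vec{y}}^i = \vec\alpha\widehat{\langle\vec{P}^i\rangle}$, the displacement is $\vec{y}^i - \tilde{\vec{y}}^i = \vec\alpha\,\vec{e}^i$, where $\vec{e}^i \in \mathbb{R}^L$ collects the errors $e^i_l = \langle P_l^i\rangle - \widehat{\langle P_l^i\rangle}$. Bounding each output coordinate by $\big\lvert\sum_l \vec\alpha_{ml} e^i_l\big\rvert \leq \big(\sum_l \lvert\vec\alpha_{ml}\rvert\big)\max_l \lvert e^i_l\rvert \leq \norm{\vec\alpha}_\infty \max_l\lvert e^i_l\rvert \leq T\max_l\lvert e^i_l\rvert$, I obtain $\norm{\vec{y}^i - \tilde{\vec{y}}^i} \leq T\max_l\lvert e^i_l\rvert$ and hence $\mathcal{W}_1(\mathbb{P}^B,\tilde{\mathbb{P}}^B) \leq T\max_{i,l}\lvert e^i_l\rvert$. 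Under the assumed uniform bound $\max_{i,l}\lvert e^i_l\rvert < \epsilon/T$ this gives $\mathcal{W}_1 < \epsilon$, establishing the inclusion above.

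The hard part is the bookkeeping of norm conventions so that no spurious dimensional factor ($L$ or $\sqrt{M}$) survives: the clean threshold $\epsilon/T$ emerges only when the ground metric on $\mathbb{R}^M$ and the matrix norm $\norm{\vec\alpha}_\infty$ are matched, i.e.\ the induced $\ell_\infty\!\to\!\ell_\infty$ operator norm (the maximal absolute row sum) is used, so that $\norm{\vec\alpha}_\infty \leq T$ controls each output coordinate directly. This is precisely the convention under which the remark $\norm{\vec\alpha}_\infty \leq T \Rightarrow \norm{O_m}_\infty \leq T$ holds by the triangle inequality. Once it is fixed, every step is a one-directional inequality, and the only remaining care is to keep the reduction to the complement event logically tight (tracking strict versus non-strict inequalities) when passing back to probabilities.
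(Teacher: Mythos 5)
Your proposal is correct and follows essentially the same route as the paper's proof: bound $\mathcal{W}_1(\mathbb{P}^B,\tilde{\mathbb{P}}^B)$ by the diagonal (identity) coupling of the two paired empirical measures, propagate the per-sample, per-Pauli estimation errors through $\vec\alpha$ via a norm inequality to obtain the bound $T\max_{i,l}\lvert \langle P^i_l\rangle-\widehat{\langle P^i_l\rangle}\rvert$, and pass to probabilities by event inclusion (the paper phrases this last step as a chain of inequalities followed by a union of events rather than your contrapositive, which is logically identical). The only divergence is norm bookkeeping: the paper uses the $\ell_1$ ground metric and writes $\lVert\vec\alpha\,\vec{e}\rVert_1\le\lVert\vec\alpha\rVert_\infty\lVert\vec{e}\rVert_\infty$, whereas you match the $\ell_\infty$ ground metric to the maximal-row-sum operator norm, which is in fact the convention under which the constant $T$ comes out exactly as stated, so your explicit attention to that point is a small improvement rather than a deviation.
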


\begin{proof}
Using definition of $\mathcal{W}_1$-distance and properties of the norm we have:
    \begin{align}
            \mathcal{W}_1(\mathbb{}^B, \tilde{\mathbb{P}}^B) &= \inf_{\pi\in S_N} \left( \frac{1}{N} \sum_{i=1}^N \lVert \vec{y}^i - \tilde{\vec{y}}^{\pi(i)}\rVert_1 \right) \leq \frac{1}{N} \sum_{i=1}^N \lVert \vec{y}^i - \tilde{\vec{y}}^i\rVert_1\\
            &= \frac{1}{N} \sum_{i=1}^N \lVert \vec\alpha (\langle  \vec{O}^i  \rangle - \widehat{\langle \vec{O}^i \rangle})\rVert_1 \leq \frac{1}{N} \sum_{i=1}^N \lVert \vec\alpha \rVert_{\infty} \lVert \langle  \vec{O}^i  \rangle - \widehat{\langle \vec{O}^i \rangle} \rVert_{\infty} \\
            &\leq \frac{T}{N} \sum_{i=1}^N \max_l \lvert \langle  O^i_l  \rangle - \widehat{\langle O^i_l \rangle} \rvert.
    \end{align}
Taking into account the probability inequality:
    \begin{equation}
        \begin{aligned}
            \mathrm{Pr} (\mathcal{W}_1(\mathbb{P}^B, \tilde{\mathbb{P}}^B) \geq \epsilon) \leq \mathrm{Pr} \left(\frac{1}{N}\sum_{i=1}^N\max_{l}\lvert \langle O_l^i \rangle - \widehat{\langle O_l^i \rangle}\rvert \geq \frac{\epsilon}{T}\right) \leq \mathrm{Pr} \left(\bigcup_{i=1}^N \bigcup_{l=1}^L \left\{\lvert \langle O_l^i \rangle - \widehat{\langle O_l^i \rangle}\rvert \geq \frac{\epsilon}{T}\right\}\right).
        \end{aligned}
    \end{equation}
\end{proof}
\begin{lemma}[restatement of Th1, p.13 and L3, p.26 from \cite{huang_predicting_2020}]\label{huang_restatement}
    Fix a measurement primitive $\mathcal{U}$ of randomly chosen Pauli measurements, a collection of observables $P_1,\ldots,P_L$, which are at most k-local Pauli strings. Fix accuracy parameters $\epsilon,\delta \in [0,1]$. Then, a collection of
    \begin{equation}
        K = \left\lceil 68\frac{3^k}{\epsilon^2} \log \left(\frac{2L}{\delta}\right) \right\rceil
    \end{equation}
    independent classical shadows allow for accurately predicting expectation  values of all observables via the median of means prediction, which means that:
    \begin{equation}
        \left| \hat{p}_l (K) - \mathrm{tr} \left( P_l \rho \right) \right| \leq \epsilon \quad \text{for all $1 \leq l \leq L$}
    \end{equation}
    with probability at least $1-\delta$.
\end{lemma}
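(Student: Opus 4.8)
Since Lemma~\ref{huang_restatement} restates the classical-shadow guarantee of \cite{huang_predicting_2020} specialized to $k$-local Pauli observables, my plan is to reconstruct the three-part argument behind it: build an unbiased single-shot estimator from the inverted measurement channel, bound its fluctuation by the shadow norm (which collapses to $3^k$ for $k$-local Paulis), and then convert this variance bound into a simultaneous high-probability statement over all $L$ observables by combining median-of-means with a union bound. The factors $3^k$, $\log(2L/\delta)$, and the explicit constant $68$ in the sample size $K$ each originate from a distinct one of these steps.

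First I would use that the single-qubit random Pauli snapshot $\hat{\rho}(\hat{b})$ in (\ref{eq.shadow}) is, by the construction of $\mathcal{M}^{-1}$, an unbiased estimator of $\rho$, so that the single-shot quantity $\hat{o}_l = \tr(P_l\,\hat{\rho}(\hat{b}))$ satisfies $\mathbb{E}[\hat{o}_l] = \tr(P_l \rho)$ for each $l$. The central quantitative step is to control the spread of $\hat{o}_l$ through the shadow norm $\lVert P_l \rVert_{\mathrm{shadow}}$. Exploiting the tensor-product form of $\hat{\rho}(\hat{b})$, the relevant second moment factorizes over qubits: every qubit on which $P_l$ acts nontrivially contributes a factor $3$ from the inverse single-qubit channel, whereas identity factors contribute $1$. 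This yields $\lVert P_l \rVert_{\mathrm{shadow}}^2 \le 3^{\,\mathrm{locality}(P_l)} \le 3^k$, which is the origin of the $3^k$ scaling.

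Next I would boost this constant-variance estimator into a high-confidence one. Partitioning $K$ independent snapshots into $K_2$ batches of size $K_1$, averaging within each batch, and taking the median across batches, Chebyshev's inequality forces a single batch mean to lie within $\epsilon$ of $\tr(P_l \rho)$ with at least constant probability once $K_1 \in \mathcal{O}(3^k/\epsilon^2)$; a Chernoff bound on the number of failing batches then makes the median deviate by more than $\epsilon$ with probability at most $\delta/L$ once $K_2 \in \mathcal{O}(\log(L/\delta))$. A union bound over the $L$ observables promotes this to the claimed simultaneous guarantee with total failure probability at most $\delta$, and the product $K = K_1 K_2$ recovers $\lceil 68\,(3^k/\epsilon^2)\log(2L/\delta)\rceil$.

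The main obstacle is the variance step: establishing the clean bound $\lVert P_l \rVert_{\mathrm{shadow}}^2 \le 3^k$ requires the explicit second-moment computation for the single-qubit inverse channel and the observation that it factorizes precisely because the random Pauli ensemble is a product ensemble. The remaining difficulty is bookkeeping, namely fixing $K_1 \approx 34\cdot 3^k/\epsilon^2$ and $K_2 \approx 2\log(2L/\delta)$ so that the constants multiply out to $68 = 34 \times 2$. Since both computations are carried out in \cite{huang_predicting_2020} (their Theorem~1 and Lemma~3), the proof ultimately reduces to invoking those results under the identification $\lVert P_l \rVert_{\mathrm{shadow}}^2 \le 3^k$.
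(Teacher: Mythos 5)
Your proposal is correct and consistent with the paper's treatment: the paper offers no independent proof of this lemma, presenting it purely as a restatement of Theorem~1 and Lemma~3 of \cite{huang_predicting_2020}, which is exactly what your argument ultimately reduces to. Your reconstruction of the underlying machinery --- unbiasedness of the snapshot, the product-form shadow-norm bound $\lVert P_l \rVert^2_{\mathrm{shadow}} \leq 3^k$ for $k$-local Pauli strings, and the median-of-means plus union-bound step yielding the constant $68 = 34 \times 2$ --- is a faithful sketch of the cited proof.
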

Next, we show the main theorem again and prove it with the above lemmas.
\begin{theorem}[Sample Complexity]\label{thm:sample_complexity}  
Let $L$ be the number of $k$-local Pauli strings. Let $B$ be the batch size. Denote by $\mathbb{P}^B$ the empirical noiseless distribution of a batch of samples generated from an OT-EVS with shadow-frugal parameterization, and by $\tilde{\mathbb{P}}^B$ the distribution with measurement error. We assume $\norm{\vec\alpha}_{\infty} \leq T$, implying $\norm{A_m}_{\infty}\leq T$, $\forall 1\leq m \leq M$, where $A_m = \sum_{l = 1}^L \vec\alpha_{m, l} O_l$. For $\epsilon, \delta >0$, the following probabilistic bound
\begin{equation}
\operatorname{Pr} \left(\mathcal{W}_1(\mathbb{P}^B, \tilde{\mathbb{P}}^B) \leq \epsilon \right) > 1- \delta
\end{equation}
holds with measurements $N_s$,
\begin{align}
    N_s &\geq 68 \cdot3^k \left\lceil B \frac{T^2}{\epsilon^2} \log \left( \frac{2LB}{\delta} \right) \right\rceil \text{ (with classical shadows)} \label{shadow equation} \\
    N_s &\geq 2 L \left\lceil B \frac{T^2}{\epsilon^2} \log \left( \frac{2LB}{\delta} \right) \right\rceil  \label{conventional equation} \text{(with the conventional measurements)}
\end{align}
\end{theorem}
\begin{proof}
\emph{Conventional Measurements:}
Using Lemma~\ref{probability_uneq} and Boole's inequality, we can place an upper bound on the probability for the expression involving the $\mathcal{W}_1$-distance between distributions and then impose a condition that this probability is sufficiently small:
    \begin{equation}
        \begin{aligned}
                \mathrm{Pr} (\mathcal{W}_1(\mathbb{P}^B, \tilde{\mathbb{P}}^B) \geq \epsilon) \leq     \mathrm{Pr} \left(\bigcup_{i=1}^B \bigcup_{l=1}^L \left\{\lvert \langle O_l^i \rangle - \widehat{\langle O_l^i \rangle}\rvert \geq \frac{\epsilon}{T}\right\}\right) \leq \sum_{i=1}^B  \sum_{l=1}^L  \mathrm{Pr} \left( \left| \hat{o}_l^i (K) - \mathrm{tr} \left( O_l \rho^i \right) \right| \geq \frac{\epsilon}{T} \right) \leq \delta.
        \end{aligned}
    \end{equation}
    For any fixed $i \in [B]$ and $l \in [L]$ we compute average over $K$ measurements of $\rho_i$, where $K$ is chosen such that:
    \begin{equation}
        \left| \hat{o}_l^i (K) - \mathrm{tr} \left( O_l \rho^i \right) \right| \leq \frac{\epsilon}{T} \quad \text{for all $1 \leq i \leq B$, $1 \leq l \leq L$}
    \end{equation}
    with failure probability $\frac{\delta}{BL}$. Using Hoeffding inequality, we can compute $K$:
    \begin{equation}
            \mathrm{Pr} \left(\left| \hat{o}_l^i (K) - \mathrm{tr} \left( O_l \rho^i \right) \right| \geq \frac{\epsilon}{T} \right) \leq 2 \exp \left\{-\frac{2K^2 \epsilon^2}{4KT^2} \right\} \leq \frac{\delta}{BL}.
    \end{equation}
    Thus, $K = \left\lceil 2 \frac{T^2}{\epsilon^2} \log \left( \frac{2BL}{\delta} \right) \right\rceil$. Multiplying it by $BL$ gives us the total number of measurements.
\\ \\
\hspace{-0.33cm}\textit{Classical Shadows:}
Using Lemma~\ref{probability_uneq} and Boole's inequality, we can place an upper bound on the probability for the expression involving the $\mathcal{W}_1$-distance between distributions and then impose a condition that this probability is sufficiently small:
    \begin{align}
            \mathrm{Pr} (\mathcal{W}_1(\mathbb{P}^B, \tilde{\mathbb{P}}^B) \geq \epsilon) \leq \mathrm{Pr}\left(\bigcup_{i=1}^B \bigcup_{l=1}^L \left\{\lvert \langle O_l^i \rangle - \widehat{\langle O_l^i \rangle}\rvert \geq \frac{\epsilon}{T}\right\}\right)
             \leq \sum_{i=1}^B \mathrm{Pr}\left( \bigcup_{l=1}^L \left\{ \left| \hat{o}_l^i (K) - \mathrm{tr} \left(O_l^i \rho^i \right) \right| \geq \frac{\epsilon}{T} \right\} \right) \leq \delta
    \end{align}
    For any fixed $i \in [B]$, the number of measurements $K$ for constructing shadow array is chosen such that:
    \begin{equation}
        \left| \hat{o}_l (K) - \mathrm{tr} \left( O_l \rho \right) \right| \leq \frac{\epsilon}{T} \quad \text{for all $1 \leq l \leq L$}
    \end{equation}
    with failure probability $\frac{\delta}{B}$.
    
    Using Lemma~\ref{huang_restatement} we can conclude that $K =  \left\lceil 68 \frac{T^2 3^k}{\epsilon^2} \log \left( \frac{2BL}{\delta} \right) \right\rceil$. Multiplying it by $B$ gives us a total number of measurements.
\end{proof}

\section*{Supplementary Note 3 -- Pseudocode for the training algorithms}

\begin{algorithm}[H]
  \caption{\textit{Joint}}
\begin{algorithmic}[1]
  \Require The gradient penalty coefficient $\lambda$, the number of critic parameters $\vec{w}$ updates per one model iteration $N_{\vec{w}}$, the batch size $B$, Adam hyperparameters for quantum circuit, observable and critic parameters $(\gamma^{\vec{\theta}}, \beta_1^{\vec{\theta}}, \beta_2^{\vec{\theta}}, \gamma^{\vec{\vec{\alpha}}}, \beta_1^{\vec{\vec{\alpha}}}, \beta_2^{\vec{\vec{\alpha}}}, \gamma^{\vec{w}}, \beta_1^{\vec{w}}, \beta_2^{\vec{w}})$.
  \Require Initial quantum circuit, observable, and critic 
    parameters $(\vec{w}_0, \vec{\theta}_0, \vec{\alpha}_0)$.
    \While{$(\vec{\theta}, \vec{\alpha})$ has not converged}
      \For{$t = 1, \dots, N_{\vec{w}}$}
      \For{$i = 1, \dots, B$}
        \State Sample real data $\vec{x}^{(i)} \sim \mathbb{Q}$, latent variable $\vec{z}^{(i)} \sim \mathbb{P}_z$, a random number $\epsilon \sim U[0,1]$.
        \State $\tilde{\vec{x}}^{(i)} \gets G_{\vec{\theta},\vec{\alpha}} (\vec{z}^{(i)})$
        \State $\hat{\vec{x}}^{(i)} \gets \epsilon \vec{x}^{(i)} + (1-\epsilon)\tilde{\vec{x}}^{(i)}$
      \EndFor
      \State $\mathcal{L}_C \gets \frac{1}{B}\sum_{i=1}^B \left[ D_{\vec{w}}(\tilde{\vec{x}}^{(i)}) - D_{\vec{w}}(\vec{x}^{(i)})+ \lambda \left(\norm{\nabla_{\hat{\vec{x}}}D_{\vec{w}}(\hat{\vec{x}}^{(i)})}_2 -1\right)^2 \right]$
      \State $\vec{w} \gets$ Adam$(\nabla_{\vec{w}} \mathcal{L}_{C}, \vec{w}, \gamma^{\vec{w}}, \beta_1^{\vec{w}}, \beta_2^{\vec{w}})$ 
      \EndFor
      \color{blue}
      \State Sample a batch of latent variables $\{\vec{z}^{(i)}\}_{i=1}^B \sim \mathbb{P}_z$
      \State $\mathcal{L}_G \gets -\frac{1}{B}\sum_{i=1}^B D_{\vec{w}}(G_{\vec{\theta}, \vec{\alpha}}(\vec{z}^{(i)}))$
      \State $(\vec{\theta}, \vec{\alpha}) \gets $ Adam$(\nabla_{\vec{\theta}, \vec{\alpha}} \mathcal{L}_G, (\vec{\theta}, \vec{\alpha}), (\gamma^{\vec{\theta}}, \gamma^{\vec{\alpha}}), (\beta_1^{\vec{\theta}}, \beta_1^{\vec{\alpha}}), (\beta_2^{\vec{\theta}}, \beta_2^{\vec{\alpha}}))$
      \color{black}
    \EndWhile
\end{algorithmic}
\label{alg: joint}
\end{algorithm}
\vspace{-0.6cm}

\begin{algorithm}[H]
\caption{\textit{Asynchronous}}
\begin{algorithmic}[1]
    \Require The gradient penalty coefficient $\lambda$, the number of critic parameters $\vec{w}$ updates per one model iteration $N_{\vec{w}}$, the number of observable parameters $\vec{\alpha}$ updates per one model iteration $N_{\vec{\alpha}}$, the batch size $B$, Adam hyperparameters for quantum circuit, observable and critic parameters $(\gamma^{\vec{\theta}}, \beta_1^{\vec{\theta}}, \beta_2^{\vec{\theta}}, \gamma^{\vec{\vec{\alpha}}}, \beta_1^{\vec{\vec{\alpha}}}, \beta_2^{\vec{\vec{\alpha}}}, \gamma^{\vec{w}}, \beta_1^{\vec{w}}, \beta_2^{\vec{w}})$.
  \Require Initial quantum circuit, observable, and critic 
    parameters $(\vec{w}_0, \vec{\theta}_0, \vec{\alpha}_0)$.
    \While{$(\vec{\theta}, \vec{\alpha})$ has not converged}
      \For{$t = 1, \dots, N_{\vec{w}}$}
      \For{$i = 1, \dots, B$}
        \State Sample real data $\vec{x}^{(i)} \sim \mathbb{Q}$, latent variable $\vec{z}^{(i)} \sim \mathbb{P}_z$, a random number $\epsilon \sim U[0,1]$.
        \State $\tilde{\vec{x}}^{(i)} \gets G_{\vec{\theta},\vec{\alpha}} (\vec{z}^{(i)})$
        \State $\hat{\vec{x}}^{(i)} \gets \epsilon \vec{x}^{(i)} + (1-\epsilon)\tilde{\vec{x}}^{(i)}$
      \EndFor
      \State $\mathcal{L}_C \gets \frac{1}{B}\sum_{i=1}^B \left[D_{\vec{w}}(\tilde{\vec{x}}^{(i)}) - D_{\vec{w}}(\vec{x}^{(i)})+ \lambda \left(\norm{\nabla_{\hat{\vec{x}}}D_{\vec{w}}(\hat{\vec{x}}^{(i)})}_2 -1\right)^2\right]$
      \State $\vec{w} \gets$ Adam$(\nabla_{\vec{w}} \mathcal{L}_{C}, \vec{w}, \gamma^{\vec{w}}, \beta_1^{\vec{w}}, \beta_2^{\vec{w}})$ 
      \EndFor
   \color{blue}
      \State Sample a batch of latent variables $\{\vec{z}^{(i)}\}_{i=1}^B \sim \mathbb{P}_z$.
      \For {$j=1,\dots, N_{\vec{\alpha}}$}
        \State $\mathcal{L}_G \gets -\frac{1}{B}\sum_{i=1}^B D_{\vec{w}}(G_{\vec{\theta}, \vec{\alpha}}(\vec{z}^{(i)}))$
        \State $\vec{\alpha} \gets $ Adam$(\nabla_{\vec{\alpha}} \mathcal{L}_G, \vec{\alpha}, \gamma^{\vec{\alpha}}, \beta_1^{\vec{\alpha}}, \beta_2^{\vec{\alpha}})$
      \EndFor
      \State $\mathcal{L}_G \gets -\frac{1}{B}\sum_{i=1}^B D_{\vec{w}}(G_{\vec{\theta}, \vec{\alpha}}(\vec{z}^{(i)}))$
      \State $\vec{\theta} \gets $ Adam$(\nabla_{\vec{\theta}} \mathcal{L}_G, \vec{\theta}, \gamma^{\vec{\theta}}, \beta_1^{\vec{\theta}}, \beta_2^{\vec{\theta}})$ 
    \color{black}
    \EndWhile
\end{algorithmic}
\end{algorithm}
\vspace{-0.6cm}

\begin{algorithm}[H]
  \caption{\textit{Decoupled}}
\begin{algorithmic}[1]
  \Require The gradient penalty coefficient $\lambda$, the number of critic parameters $\vec{w}$ updates per iteration $N_{\vec{w}}$, the number of observable parameters $\vec{\alpha}$ updates per iteration $N_{\vec{\alpha}}$, the batch size $B$, Adam hyperparameters for quantum circuit, observable and critic parameters $(\gamma^{\vec{\theta}}, \beta_1^{\vec{\theta}}, \beta_2^{\vec{\theta}}, \gamma^{\vec{\vec{\alpha}}}, \beta_1^{\vec{\vec{\alpha}}}, \beta_2^{\vec{\vec{\alpha}}}, \gamma^{\vec{w}}, \beta_1^{\vec{w}}, \beta_2^{\vec{w}})$.
  \Require Initial quantum circuit, observable and critic parameters $(\vec{w}_0, \vec{\theta}_0, \vec{\alpha}_0)$.
     \While{$(\vec{\theta}, \vec{\alpha})$ has not converged}
    \color{blue}\For{$t = 1, \dots, N_{\vec{\alpha}}$}
      \color{black}\For{$t = 1, \dots, \lceil N_{\vec{w}} / N_{\vec{\alpha}} \rceil $}
      \For{$i = 1, \dots, B$}
        \State Sample real data $\vec{x}^{(i)} \sim \mathbb{Q}$, latent variable $\vec{z}^{(i)} \sim \mathbb{P}_z$, a random number $\epsilon \sim U[0,1]$.
        \State $\tilde{\vec{x}}^{(i)} \gets G_{\vec{\theta},\vec{\alpha}} (\vec{z}^{(i)})$
        \State $\hat{\vec{x}}^{(i)} \gets \epsilon \vec{x}^{(i)} + (1-\epsilon)\tilde{\vec{x}}^{(i)}$
      \EndFor
      \State $\mathcal{L}_C \gets \frac{1}{B}\sum_{i=1}^B \left[D_{\vec{w}}(\tilde{\vec{x}}^{(i)}) - D_{\vec{w}}(\vec{x}^{(i)})+ \lambda \left(\norm{\nabla_{\hat{\vec{x}}}D_{\vec{w}}(\hat{\vec{x}}^{(i)})}_2 -1\right)^2 \right]$
      \State $\vec{w} \gets$ Adam$(\nabla_{\vec{w}} \mathcal{L}_{C}, \vec{w}, \gamma^{\vec{w}}, \beta_1^{\vec{w}}, \beta_2^{\vec{w}})$ 
      \EndFor \color{blue} 
      \State $\mathcal{L}_G \gets -\frac{1}{B}\sum_{i=1}^B D_{\vec{w}}(G_{\vec{\theta}, \vec{\alpha}}(\vec{z}))$
      \State $\vec{\alpha} \gets $ Adam$(\nabla_{\vec{\alpha}} \mathcal{L}_G, \vec{\alpha}, \gamma^{\vec{\alpha}}, \beta_1^{}, \beta_2^{\vec{\alpha}})$ \EndFor
    
    \State Sample a batch of latent variables $\{\vec{z}^{(i)}\}_{i=1}^B \sim \mathbb{P}_z$.
      \State $\mathcal{L}_G \gets -\frac{1}{B}\sum_{i=1}^B D_{\vec{w}}(G_{\vec{\theta}, \vec{\alpha}}(\vec{z}))$
      \color{blue}\State $\vec{\theta} \gets $ Adam$(\nabla_{\vec{\theta}} \mathcal{L}_G, \vec{\theta}, \gamma^{\vec{\theta}}, \beta_1^{\vec{\theta}}, \beta_2^{\vec{\theta}})$ 
  \color{black}\EndWhile
\end{algorithmic}
\end{algorithm}

\section*{Supplementary Note 4 -- Numerical Experiment Setup}
\label{app:numerical details}
We use the Python package Tensorcircuit \cite{zhang_tensorcircuit_2023} for constructing quantum circuits, Equinox \cite{kidger_equinox_2021} for constructing the remaining architecture of the generator and the critic, \texttt{Jax} \cite{bradhury_jax_2018} for the simulation of training and sampling and \texttt{FAISS} \cite{douze_faiss_2024} for the $k$-NN subroutine in the KLD estimator (Equation 25 of Ref. \cite{wang_divergence_2009}). We perform all experiments on a single \texttt{NVIDIA RTX 2080 Ti GPU}.

This stack prioritizes the computational speed for simulating the training and sampling of OT-EVSs by making a copy of the quantum circuit for each observable, for each batch sample, and parallelizing all the circuits on the GPU. Our simulations, however, put a relatively high burden on GPU memory. The GPU memory consumption scales not only in the number of qubits $n$, the number of Pauli strings $L$, and the batch size $B$ but also significantly in the number of quantum circuit parameters $\vec{\vec{\theta}}$ because of the gradient storage in automatic differentiation. Simulations of the shadow measurement scheme, for which the expectation values of $2k-$local observables need to be computed for variance approximation (Methods), consume more memory than those of the conventional measurement scheme by an asymptotic factor of $n^k$. Consequently, we only use conventional measurements for some of our numerical experiments with larger models.

\begin{figure*}[h]
\begin{tabular}{ll}
$U_{n=4}^{B}(\vec{\vec{\theta}})$ =
\scalebox{1}{\begin{quantikz}[row sep = {0.8cm, between origins}, column sep=0.2cm]
   & \gate[1][1.15cm]{Y(\vec{\theta}^1_1)} & \ctrl{1} & &
\\
  & \gate[1][1.15cm]{Y(\vec{\theta}^1_2)} & \control{} & \ctrl{1} & 
\\  
  & \gate[1][1.15cm]{Y(\vec{\theta}^1_3)} & \ctrl{1} & \control{} &
\\  
    & \gate[1][1.15cm]{Y(\vec{\theta}^1_4)} & \control{} && 
\end{quantikz}}
\hspace{2.05cm} 
$U_{n=4}^{S}(\vec{\vec{\theta}})$ =
\scalebox{1}{\begin{quantikz}[row sep = {0.8cm, between origins}, column sep=0.2cm]
   & \gate[1][1.15cm]{Y(\vec{\theta}^1_1)} & \ctrl{1} & & \ctrl{1} & &&  & &&&
\\
  & \gate[1][1.15cm]{Y(\vec{\theta}^1_2)} & \targ{} & \gate[1][1.15cm]{Y(\vec{\theta}^1_{n+1})} & \targ{}   &  \ctrl{1} & & \ctrl{1}           & &&&
\\  
  & \gate[1][1.15cm]{Y(\vec{\theta}^1_3)} &&& & \targ{} &  \gate[1][1.15cm]{Y(\vec{\theta}^1_{n+2})}  & \targ{} & \ctrl{1}& & \ctrl{1} &
\\  
    & \gate[1][1.15cm]{Y(\vec{\theta}^1_4)} &&&&&& & \targ{} &  \gate[1][1.15cm]{Y(\vec{\theta}^1_{n+3})}  & \targ{} & 
\end{quantikz}}
\end{tabular}
\caption{The circuit diagrams for one layer of brickwork ansatz (left) and one layer of sequential ansatz (right) for $4$ qubits.}
\label{fig: circuit diagrams}
\end{figure*}

We examined two circuit ansatze, \textit{sequential} and \textit{brickwork}, each with a different encoding circuit. The circuit diagrams for one layer of sequential ansatz (denoted by $U^S$) and one layer of brickwork ansatz (denoted by $U^B$) are shown in Figure \ref{fig: circuit diagrams}. We consider different combinations of latent variable embedding strategies and variational ansatz. For the illustrative example in ``Methods'', we use $$U^S \bigotimes_{j=1}^{n}X(z_2)U^S\bigotimes_{j=1}^{n}X(z_1).$$ For the sequential and brickwork circuits in ``Results'', we use 
$$\left(U^S\right)^{N_l}\bigotimes_{j=1}^{n}Z_j(z_2)\bigotimes_{j=1}^{n}X_j(z_1) \hspace{0.5cm} \text{and} \hspace{0.5cm} \left(U^B\right)^{N_l} \bigotimes_{\substack{j=1 \\ j \text{ odd}}}^{n}X_j(z_{1}) \bigotimes_{\substack{k=1 \\ k \text{ even}}}^{n} X_k(z_{2}).$$ 

In the numerical experiments for training methods and shot noise, we examined models with both ansätze across three configurations (number of qubits $n$, number of circuit layers $N_l$, locality of observables $k$, data dimension $M$):
\begin{enumerate}[label=(C\arabic*)]
\centering
\item $n=8$, $N_l=2$, $k=1$, $M=8$, 
\item $n=8$, $N_l=9$, $k=1$, $M=8$,
\item $n=11$, $N_l=2$, $k=2$, $M=64$.
\end{enumerate}

\begin{table}[h]
\caption{List of hyperparameters used in
numerical experiments. We perform hyperparameter optimization by random grid search for each experiment. All trials in each experiment (for all WGAN variants, all measurement schemes, or all circuit depths, when applicable) share the same hyperparameters.}
\begin{tabular}{ p{1.5cm} p{1.8cm} p{0.9cm} p{0.9cm} p{0.9cm} p{0.9cm} p{0.9cm} p{0.9cm} p{0.9cm} p{0.9cm} p{0.9cm} p{0.9cm} p{0.9cm} p{0.9cm} p{0.9cm}}
\hline
\hline
\centering & & \multicolumn{13}{c}{WGAN Hyperparameters} \\
\cline{3-15}
\centering & & $\lambda$ & $N_{\vec{w}}$ & $N_{\vec{\vec{\alpha}}}$ & $B$ & $\gamma^{\vec{\vec{\theta}}}$ & $\beta_1^{\vec{\vec{\theta}}}$ & $\beta_2^{\vec{\vec{\theta}}}$ & $\gamma^{\vec{\vec{\alpha}}}$ & $\beta_1^{\vec{\vec{\alpha}}}$ & $\beta_2^{\vec{\vec{\alpha}}}$ & $\gamma^{\vec{w}}$ & $\beta_1^{\vec{w}}$ & $\beta_2^{\vec{w}}$\\
\hline
\multicolumn{2}{c}{Illustration ($n=4$)} & 0.1 & 5 & 5 & 256 & $10^{-3}$ & 0 & 0.9 & $10^{-4}$ & 0.9 & 0.9 & $10^{-4}$ & 0.5 & 0.9 \\
\hline
\centering \multirow{6}{4em}{Compare Training Settings} & Seq.(C1) & 0.1 & 5 & 5 & 256 & $10^{-3}$ & 0 & 0.99 & $10^{-4}$ & 0 & 0.9 & $10^{-4}$ & 0.9 & 0.99 \\
\centering & Seq.(C2) & 0.1 & 5 & 5 & 256 & $10^{-3}$ & 0 & 0.5 & $10^{-4}$ & 0 & 0.9 & $10^{-4}$ & 0.5 & 0.9 \\
\centering & Seq.(C3) & 0.1 & 5 & 5 & 256 & $10^{-2}$ & 0.5 & 0.5 & $10^{-4}$ & 0.5 & 0.9 & $10^{-4}$ & 0.5 & 0.9 \\
\centering & Brk.(C1) & 0.1 & 5 & 5 & 256 & $10^{-3}$ & 0 & 0.9 & $10^{-4}$ & 0 & 0.99 & $10^{-4}$ & 0 & 0.99 \\
\centering & Brk.(C2) & 0.1 & 5 & 5 & 256 & $10^{-3}$ & 0 & 0.5 & $10^{-4}$ & 0 & 0.9 & $10^{-4}$ & 0.5 & 0.9 \\
\centering & Brk.(C3) & 0.1 & 5 & 5 & 256 & $10^{-2}$ & 0.5 & 0.5 & $10^{-4}$ & 0 & 0.5 & $10^{-4}$ & 0.5 & 0.9 \\
\hline 
\centering \multirow{2}{4em}{OT- vs. OF-EVS} & Seq. & 0.1 & 5 & 5 & 256 & $10^{-3}$ & 0 & 0.5 & $10^{-4}$ & 0 & 0.9 & $10^{-4}$ & 0.5 & 0.9  \\
\centering & Brk. & 0.1 & 5 & 5 & 256 & $10^{-3}$ & 0 & 0.5 & $10^{-4}$ & 0 & 0.9 & $10^{-4}$ & 0.5 & 0.9 \\
\hline
\centering \multirow{2}{4em}{Image Datasets} & MNIST & 1 & 5 & 5 & 256 & $10^{-2}$ & 0.9 & 0.9 & $10^{-3}$ & 0 & 0.99 & $10^{-3}$ & 0.5 & 0.5   \\
\centering & Fashion & 1 & 5 & 5 & 256 & $10^{-2}$ & 0.9 & 0.9 & $10^{-3}$ & 0 & 0.99 & $10^{-3}$ & 0.5 & 0.5 \\
\hline 
\hline
\end{tabular}
\label{table:hyperparameters}
\end{table}

In all numerical experiments, the models are initialized as follows before training: The quantum circuit parameters are drawn from the uniform distribution on $[-\pi, \pi)$, and the observable and critic parameters are set according to the Kaiming Initialisation \cite{he_delving_2015}. We always use the hyperparmeters $N_{\vec{w}}=N_{\vec{\alpha}}=5$ for training. The other hyperparameters for training are searched based on a randomized grid search, with results summarized in Table \ref{table:hyperparameters}. For a fair comparison, we use exactly the same hyperparameter setting for all experiment cases, i.e., across all three training algorithms, measurement schemes, and all model configurations for the same training algorithm and measurement scheme. We evaluate the model performance (KL divergence) using $2048$ generated samples (maximum allowed by FAISS \cite{douze_faiss_2024}) and $2048$ training data.

We train autoencoders to downscale the MNIST and Fashion-MNIST datasets. The two autoencoders have the same architecture, except that the data are downscaled to $8$ and $16$ dimensions, respectively. The encoder is a multi-layer perceptron of two hidden layers ($128$ and $64$ neurons) with RELU activation function. The decoder has the reversed architecture with an additional final sigmoid function.

\newpage

\begin{figure}[h!]
\includegraphics[width=0.7\linewidth]{Supplementary_Figure_2.pdf}
\caption{Training performance of OT-EVS with (a) a nine-layer $8$-qubit \textit{sequential} circuit (b) a nine-layer $8$-qubit \textit{brickwork} circuit (c) a two-layer $11$-qubit \textit{sequential} circuit (d) a two-layer $11$-qubit \textit{brickwork} circuit using conventional measurements on the synthetic dataset. The interquartile mean and bootstraped $95\%$ confidence intervals over $20$ trials for the estimated KL divergence after $50k$ training iterations are shown.}
\label{fig: mnist}
\end{figure}

\begin{figure}[h!]
\includegraphics[width=0.9\linewidth]{Supplementary_Figure_3.pdf}
\caption{Training performance of OT-EVS with a two-layer $8$-qubit \textit{brickwork} circuit using the (a) \textit{Joint} (b) \textit{Asynchronous} (c) \textit{Decoupled} method on the synthetic dataset. 
The interquartile mean and bootstraped $95\%$ confidence intervals over $20$ trials for the estimated KL divergence after $50k$ training iterations are shown.}
\label{fig: mnist}
\end{figure}

\end{document}